\DeclareFontFamily{U}{mathx}{\hyphenchar\font45}
\DeclareFontShape{U}{mathx}{m}{n}{
      <5> <6> <7> <8> <9> <10>
      <10.95> <12> <14.4> <17.28> <20.74> <24.88>
      mathx10
      }{}
\DeclareSymbolFont{mathx}{U}{mathx}{m}{n}
\DeclareMathSymbol{\bigtimes}{1}{mathx}{"91}
\definecolor{DarkRed}{rgb}{0.5,0.1,0.1}
\definecolor{DarkBlue}{rgb}{0.1,0.1,0.5}
\definecolor{ForestGreen}{rgb}{0.1333,0.5451,0.1333}
\definecolor{Red}{rgb}{0.9,0,0}
\crefname{property}{property}{Property}
\crefname{equation}{Eq}{Eq}
\crefname{algocf}{Algorithm}{Algorithms}
\tikzset{vertex/.style={circle, black, fill=Yellow, line width=1pt, draw, minimum width=8pt, minimum height=8pt, inner sep=0pt}}
\renewcommand{\paragraph}[1]{\medskip\noindent\textbf{#1}}
\newtheorem{theorem}{Theorem}
\newtheorem{lemma}{Lemma}[section]
\newtheorem{proposition}[lemma]{Proposition}
\newtheorem{corollary}[lemma]{Corollary}
\newtheorem{definition}[lemma]{Definition}
\newtheorem{problem}{Problem}
\newtheorem*{claim*}{Claim}
\newtheorem*{proposition*}{Proposition}
\newtheorem*{lemma*}{Lemma}
\newtheorem*{problem*}{Problem}
\crefname{lemma}{Lemma}{Lemmas}
\crefname{claim}{Claim}{Claims}
\crefname{enumi}{Step}{Steps}
\crefname{step}{Step}{Step}
\newtheorem{remark}[lemma]{Remark}
\theoremstyle{definition}
\newtheorem{mdproblem}{Problem}
\newtheorem*{mdproblem*}{Problem}
\newenvironment{Problem*}{\begin{mdframed}[hidealllines=false,innerleftmargin=10pt,backgroundcolor=gray!10,innertopmargin=5pt,innerbottommargin=5pt,roundcorner=10pt]\begin{mdproblem*}}{\end{mdproblem*}\end{mdframed}}
\newtheorem{mddefinition}[lemma]{Definition}
\renewcommand{\qed}{\nobreak \ifvmode \relax \else
      \ifdim\lastskip<1.5em \hskip-\lastskip
      \hskip1.5em plus0em minus0.5em \fi \nobreak
      \vrule height0.75em width0.5em depth0.25em\fi}
\renewcommand{\leq}{\leqslant}
\renewcommand{\geq}{\geqslant}
\renewcommand{\le}{\leqslant}
\renewcommand{\ge}{\geqslant}
\DeclarePairedDelimiter{\bracket}[]
\DeclarePairedDelimiter{\paren}()
\DeclarePairedDelimiter{\ceil}{\lceil}{\rceil}
\DeclarePairedDelimiter{\floor}{\lfloor}{\rfloor}
\DeclarePairedDelimiter{\set}{\{}{\}}
\DeclarePairedDelimiterXPP\lonenorm[1]{}\lVert\rVert{_1}{#1}
\DeclareMathOperator*{\distrib}{\mu}
\DeclareMathOperator*{\support}{supp}
\DeclareMathOperator*{\expect}{\mathbb{E}}
\DeclareMathOperator*{\variance}{Var}
\DeclarePairedDelimiterXPP{\TVD}[2]{\Delta_{\textnormal{\text{TVD}}}}(){}{#1, #2}
\DeclarePairedDelimiterXPP{\PTD}[2]{\Lambda}(){}{#1, #2}
\DeclarePairedDelimiterXPP{\Ot}[1]{\widetilde{O}}(){}{#1}
\DeclarePairedDelimiterXPP{\Omgt}[1]{\widetilde{\Omega}}(){}{#1}
\DeclarePairedDelimiterXPP{\BigO}[1]{O}(){}{#1}
\newcommand\given{ \nonscript\: \vert{} \nonscript\: \mathopen{} }
\NewDocumentCommand{\Prob}{sO{}E{_}{{}}m}{{\mathbb{P}}_{#3}
  \IfBooleanTF{#1}
  {\bracket*{#4}}
  {\bracket[#2]{#4}}
}
\NewDocumentCommand{\Exp}{sO{}E{_}{{}}m}{\expect_{#3}
  \IfBooleanTF{#1}
  {\bracket*{#4}}
  {\bracket[#2]{#4}}
}
\DeclarePairedDelimiterXPP{\Var}[1]{\variance}[]{}{#1}
\DeclarePairedDelimiterXPP{\Cov}[1]{\covariance}[]{}{#1}
\DeclarePairedDelimiterXPP{\eexp}[1]{\exp}(){}{#1}
\NewDocumentCommand{\Dist}{sO{}E{_}{{}}m}{\distrib_{#3}
  \IfBooleanTF{#1}
  {\paren*{#4}}
  {\paren[#2]{#4}}
}
\DeclarePairedDelimiterXPP{\Supp}[1]{\support}(){}{#1}
\DeclarePairedDelimiterXPP{\KL}[2]{\kldiv}(){}{#1 \;\delimsize\|\; #2}
\DeclarePairedDelimiterXPP{\Ent}[1]{\entropy}(){}{#1}
\DeclarePairedDelimiterXPP{\Inf}[2]{\inform}(){}{#1 \; ; \; #2}
\renewcommand{\epsilon}{\varepsilon}
\newcommand{\eps}{\varepsilon}
\newcommand{\poly}{\mbox{\rm poly}}
\renewcommand{\wp}{$\text{w.p.}$\xspace}
\newcommand{\whp}{$\text{w.h.p.}$\xspace}
\newenvironment{tbox}{\begin{tcolorbox}[
		enlarge top by=5pt,
		enlarge bottom by=5pt,
		 breakable,
		 boxsep=0pt,
                  left=4pt,
                  right=4pt,
                  top=10pt,
                  arc=0pt,
                  boxrule=1pt,toprule=1pt,
                  colback=white
                  ]}
{\end{tcolorbox}}
\newcommand{\col}{\varphi}
\newcommand{\Vsparse}{V_{sp}}
\newcommand{\NS}{N_{S}}
\newcommand{\ND}{N_{V\setminus S}}
\newcommand{\LS}{L_{S}}
\newcommand{\hateps}{\eta} 
\title{Faster Dynamic $(\Delta+1)$-Coloring Against Adaptive Adversaries}
\author{Maxime Flin}
\address{Reykjavik University, Iceland.}
\email{maximef@ru.is}
\author{Magnús M. Halldórsson}
\address{Reykjavik University, Iceland.}
\email{mmh@ru.is}
\thanks{This work was supported by the Icelandic Research Fund grants 217965 and 2310015.}
\date{}
\begin{document}
\begin{abstract}
We consider the problem of maintaining a proper $(\Delta + 1)$-vertex coloring in a graph on $n$-vertices and maximum degree $\Delta$ undergoing edge insertions and deletions. We give a randomized algorithm with amortized update time $\Ot{ n^{2/3} }$ against adaptive adversaries, meaning that updates may depend on past decisions by the algorithm.
This improves on the very recent $\Ot{ n^{8/9} }$-update-time algorithm by Behnezhad, Rajaraman, and Wasim (SODA 2025) and 
matches a natural barrier for dynamic $(\Delta+1)$-coloring algorithms. 
The main improvements are in the densest regions of the graph, where we use structural hints from the study of distributed graph algorithms.

\end{abstract}
\maketitle

\section{Introduction}

In the $(\Delta+1)$-coloring problem, we are given a graph with maximum degree $\Delta$ and must assign to each vertex $v$ a color $\col(v) \in \set{1, 2, \ldots, \Delta+1}$ such that adjacent vertices receive different colors. Despite its simplicity in the classical RAM model, this problem is surprisingly challenging in more constrained models such as distributed graphs \cite{Linial92,BEPS16,HSS18,CLP20}, sublinear models \cite{ACK19,CFGUZ19,CDP21,AY25}, dynamic graphs \cite{BCHN18,HP22,BGKL22,BRW24} and even recently communication complexity \cite{FM24,CMNS24}.

This paper presents a \emph{fully dynamic} algorithm for the $(\Delta+1)$-coloring problem. 
The vertex set is fixed of size $n$, while edges can be inserted or deleted as long as the maximum degree remains at most $\Delta$ at all times.
We assume $\Delta$ is known in advance.
As the graph changes, our algorithm must maintain a $(\Delta+1)$-coloring. The time the algorithm takes to process an edge insertion or deletion is called the \emph{update time} and our aim is to minimize it.

The study of dynamic algorithms for $(\Delta+1)$-coloring was initiated by \cite{BCHN18}, which provided a randomized algorithm with $O(\log n)$ amortized update time. Independently, the authors of \cite{BGKL22} and \cite{HP22} provided randomized algorithms with $O(1)$ amortized update time. The caveat with those algorithms is that they assume updates are \emph{oblivious}: all edge insertions and deletions are fixed in advance and may not adapt to the colors chosen by the algorithm.

Maintaining a coloring against an adaptive adversary --- that may base the updates on past decisions by the algorithm --- is significantly harder, as the adversary can force the algorithm to recolor vertices at every update by connecting same-colored vertices.
Until very recently, no $o(n)$ update time algorithm was known for $(\Delta+1)$-coloring against adaptive adversaries.
In a recent breakthrough, Behnezhad, Rajaraman, and Wasim \cite{BRW24} presented a randomized algorithm with $\Ot{ n^{8/9} }$ amortized update time against adaptive adversaries\footnote{In this paper, we use $\Ot{f(n)} := O(f(n) \cdot \poly(\log n))$ to hide polylogarithmic factors.}. 
In this paper, we significantly improve the runtime, obtaining the following result.
\smallskip

\begin{theorem}
    \label{thm:main}
    There exists a randomized fully dynamic algorithm maintaining a $(\Delta+1)$-coloring against adaptive adversaries in $\Ot{ n^{2/3} }$ amortized update time with high probability.
\end{theorem}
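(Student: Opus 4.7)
The plan is to combine a dynamically maintained almost-clique decomposition (ACD), borrowed from the distributed $(\Delta+1)$-coloring literature, with two tailored recoloring routines: one for sparse vertices and one for vertices inside an almost-clique. The target amortized bound $\Ot{n^{2/3}}$ will come from balancing a global rebuild period $T \approx n^{1/3}$ against a per-update worst case of $\Ot{T^2}$.

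First I would maintain an ACD $V = \Vsparse \cup K_1 \cup \cdots \cup K_t$ in which each almost-clique $K_i$ has size close to $\Delta+1$ with only few internal non-edges and small external degree, while every vertex in $\Vsparse$ has $\Omega(\Delta/\polylog n)$ equal-colored pairs of neighbors (enough ``slack'' to be recolored locally). This decomposition would be recomputed from scratch every $T$ updates using a near-linear-time distributed-style construction, so its contribution amortizes to $\Ot{n/T} = \Ot{n^{2/3}}$ per update. Between rebuilds, ACD is only lazily maintained, and a careful analysis must show that up to $T$ stale edge updates cannot destroy its guarantees too much.

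Handling conflicts incident to a sparse vertex is the easier case: because of its slack, a random sample of $\Ot{1}$ colors contains an available color with good probability, and a standard local-search fallback of cost $\Ot{\sqrt{\Delta}}$ suffices in the rare failing case. This essentially reuses the framework of prior work and will not be the bottleneck.

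The heart of the argument, and where the main improvement over $\Ot{n^{8/9}}$ must come from, is intra-clique recoloring. Inside an almost-clique $K_i$, the coloring induces a near-bijection between $K_i$ and an ``available palette'' of size $|K_i|+O(1)$, so free colors are scarce and a single new conflict typically cannot be resolved by picking an unused color. I would model equal-colored neighbor pairs in $K_i$ as a conflict graph and resolve each new conflict by pushing it along a short augmenting / Vizing-style chain in that graph, assisted by a randomized priority ordering on colors inspired by distributed slack-generation and matching-within-cliques ideas. The key technical claim, and the main obstacle, is to prove that with high probability these augmenting chains have length $\Ot{T} = \Ot{n^{1/3}}$, so that each update is resolved in $\Ot{T \cdot \polylog n} \cdot$ (lookup cost) $= \Ot{n^{2/3}}$ time even when $\Delta$ is as large as $\Theta(n)$; it is at this step that the ``natural barrier'' at $n^{2/3}$ mentioned in the introduction is matched.

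Finally, to cope with the adaptive adversary I would ensure that all random choices used to resolve a given update (sampled palettes, priority permutations on colors, random tie-breakers in chain traversal) are generated only at the moment the update is processed and are never revealed to the adversary in advance. Martingale concentration over the $T$ updates between ACD rebuilds, combined with the worst-case per-update cost analyzed above and the $\Ot{n}/T$ amortized rebuild cost, yields the claimed $\Ot{n^{2/3}}$ amortized update time \whp, as stated in \cref{thm:main}.
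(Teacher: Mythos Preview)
Your outline has the right overall shape (periodic rebuild + cheap per-update recoloring, balanced to $n^{2/3}$), but the two load-bearing claims are not the ones the paper uses, and as stated they do not go through. First, your phase length and cost accounting do not match: the rebuild is not just an ACD partition but a full \emph{fresh coloring} with slack, and computing it costs $\Ot{n^2/\zeta}$ (random trials, each checked against a color class), not $\Ot{n}$; correspondingly the paper takes phases of length $\Theta(n^{2/3})$, not $n^{1/3}$. Second, and more importantly, the crux of your dense case --- that augmenting / Vizing-style chains inside an almost-clique have length $\Ot{n^{1/3}}$ --- is asserted as ``the main obstacle'' but never argued. There is no a priori reason such chains should be short: in a near-clique the palette is essentially a bijection, and a naive push can visit $\Theta(\Delta)$ vertices. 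You also never explain how to verify a candidate color in $\Ot{n^{1/3}}$ time, which requires controlling the sizes of all color classes; without that invariant, every random trial costs $\Theta(\Delta)$.

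The paper avoids chains entirely. It refines the decomposition by \emph{average} clique sparsity (cliques with large average $a_D+e_D$ are declared sparse), and within each remaining dense clique it splits vertices into \emph{inliers} (with $a_v,e_v$ close to the average, hence $O(n^{2/3})$) and \emph{outliers}. Inliers are recolored deterministically in $O(n^{2/3})$ time from the clique palette, using a maintained colorful matching of size $\Theta(a_D)$ to guarantee availability (Accounting Lemma). Outliers, matched vertices, and sparse vertices all use random trials in $[\Delta+1]$ and may \emph{steal} a color from a vertex one level down (sparse $\to$ matched $\to$ outlier $\to$ inlier), so the recursion depth is $O(1)$, not a chain. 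The efficiency of these trials is exactly what forces the Dense Balance and Sparse Balance invariants that bound every color class to $\Ot{n^{1/3}}$ vertices. These structural pieces --- average-sparsity thresholding, inliers vs.\ outliers, bounded-depth color stealing, and explicit color-class balance --- are what your proposal is missing.
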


At its core, our algorithm is based on a structural decomposition of the vertices into four layers so that vertices of the bottom layer can be colored efficiently by maintaining and searching small sets. Vertices of higher layers are colored by random color trials enhanced with a \emph{color stealing} mechanism. 
Specifically, when a vertex cannot use a color due to a lower-level neighbor, our algorithm steals the color and recolors the lower-level vertex instead.
Interestingly, the layers correspond to the order in which streaming and distributed algorithms build their colorings.

In addition to the runtime improvement, our algorithm approaches a natural barrier for dynamic coloring algorithms against adaptive adversaries. 
The algorithm of \cite{BRW24} was primarily limited by the challenge of efficiently recoloring dense regions of the graph.
In contrast, our algorithm shifts this bottleneck to the recoloring of sparser vertices and the handling of random color trials in general. We elaborate on this in the following discussion.

\subsection{Discussion \& Open Problems}

Before we explain why $n^{2/3}$ is a natural barrier, we first establish why $n^{1/2}$ poses a fundamental limit for dynamic algorithms based on random color trials. 
Observe that this limitation extends to simpler problems, such as $2\Delta$-coloring.

The bottleneck for recoloring a vertex lies in efficiently verifying whether a color is available. One possibility is to iterate through all neighbors in $O(\Delta)$ time; another is to iterate through color classes (i.e., the set of vertices with a given color). At best, all color classes contain $O(n/\Delta)$ vertices each. So, any algorithm that checks arbitrary or random colors using the former method when $\Delta$ is small and the latter when $\Delta$ is large therefore has $O(n^{1/2})$ update time at best. 
Recall that an adaptive adversary can trigger recoloring at every update, preventing any opportunity for amortization.


To reduce the number of colors to $\Delta+1$, the approach of \cite{EPS15,HSS18,ACK19,CLP20,BRW24} and ours is to compute a partial coloring in which (sparse) vertices have many pairs of neighbors colored the same. The randomized argument underlying this approach is fragile, requiring a complete rerun every $\Theta( \Delta )$ update. 
Since it is based on random color trials at every vertex, it runs in $O(n^2 / \Delta)$ time. Amortized and balanced with the $O(\Delta)$ update time for small $\Delta$, we obtain the $n^{2/3}$ update time.

\begin{problem}
Does there exist a fully dynamic $(\Delta+1)$-coloring with $o(n^{2/3})$ update time that remains robust against adaptive adversaries?
\end{problem}

\subsection{Organization of the Paper}
In \cref{sec:technical-overview}, we provide a high-level overview of our algorithm and compare it with \cite{BRW24}.
\Cref{sec:prelims} introduces the sparser-denser decomposition that forms the foundation of our approach.
The core of our dynamic algorithm, along with the proof of \cref{thm:main}, is presented in \cref{sec:dynamic}.
Finally, we defer the least novel part of our algorithm  -- the periodic recomputation of a fresh coloring of the whole graph -- to \cref{sec:fresh}.

\section{Overview}
\label{sec:technical-overview}

We present the key concepts underlying \cref{thm:main}, deliberately simplifying certain technical details for the sake of intuition. We focus on the case where $\Delta \geq n^{2/3}$, as an update time of $O(\Delta)$ is straightforward to achieve by examining the adjacency list of the vertices.

\paragraph{The Sparser-Denser Decomposition.}
We employ a modified version of the structural decomposition introduced by Reed \cite{Reed98}, and further explored in \cite{HSS18, ACK19}, which partitions the vertices into sparse vertices and dense clusters. We classify vertices into clusters based on their average density. Vertices in clusters with low average density are designated as \emph{sparser}, while those in clusters with higher average density are considered \emph{denser} (see \cref{def:refined-acd}). This method is based on structural observations about the mean local density within a cluster (see \cref{lem:dense-ext-sparsity}), and provides a significant efficiency gain compared to the specific form of decomposition used in \cite{BRW24}.

\paragraph{Coloring the Sparser Vertices.}
The strategy for coloring sparser vertices is essentially the same as in \cite{BRW24}. Updates are performed in \emph{phases} of length $t = \Theta(n^{2/3})$, with each phase beginning by computing a \emph{fresh coloring} of the graph. This ensures that the adversary cannot decrease the number of available colors below $t$ because each sparser vertex starts with $2t$ available colors (see \cref{prop:fresh-coloring}) and the algorithm recolors at most one vertex per update. As the adversary updates the graph, we use the observation from \cite{ACK19,BRW24} that a vertex with $t$ available colors can be recolored in $\Ot{ n^{2/3} }$ time by random color trials. Note that the cost of computing a fresh coloring, amortized over updates of the following phase, is $\Ot{ n^{2/3} }$.

\paragraph{Coloring the Denser Vertices.}
We treat two types of vertices differently in each dense cluster. We exploit the fact that most vertices have sparsity not much greater than the average sparsity of their cluster. These vertices --- the \emph{inliers} --- have $O(n^{2/3})$ neighbors outside their cluster and $O(n^{2/3})$ anti-neighbors within their cluster. This allows for a compact representation of the colors available to inliers, and a simple deterministic search suffices to recolor them. The remaining vertices --- the \emph{outliers} --- are recolored using random color trials, similar to the strategy for sparser vertices. The key idea is that outliers may \emph{steal} a color from an inlier neighbor, causing the inlier to be recolored as described above. Since most colors are available for stealing, only a few random trials are required. (See \cref{lem:recolor-denser} for details.)

\paragraph{Maintaining the Color Balance.}
To guarantee efficient color trials, we ensure that each color class contains $\Ot{ n^{1/3} }$ vertices. This condition holds for the sparser vertices by an adaptation of \cite{BRW24} (see \cref{lemma:dyn-sparse-balance}). For the denser vertices, we maintain a \emph{Dense Balance Invariant}: each cluster contains at most two vertices of each color. To ensure that enough colors are available for inliers, we maintain a second invariant.

The \emph{Matching Invariant} ensures that enough colors are repeated within each cluster. These repeated colors form a \emph{colorful matching} \cite{ACK19}, which can be maintained efficiently (again) using random color trials. With the Matching Invariant in place, we can ensure that inliers receive a color that is unique within their cluster, thereby preserving the Dense Balance Invariant.

It is noteworthy how extensively our algorithm employs \emph{color stealing}: sparser vertices can steal colors from matched vertices, which can steal from outliers, and outliers can steal from inliers.

\paragraph{Comparison with \cite{BRW24}.}
The parameter used in \cite{BRW24} to distinguish between sparser and denser vertices is the $\epsilon$ of the structural decomposition \cite{Reed98,HSS18}, ultimately set to $\eps = \frac{\Delta^{1/5}}{n^{2/5}}$.
While this choice allows for a uniform treatment of dense vertices, it is costly due to its quadratic dependence on (local) sparsity.
Specifically, sparse vertices have $\Omega(\epsilon^2 \Delta)$ available colors, which limits the length of each phase and increases the amortized cost of recomputing fresh colorings.
Additionally, maintaining the decomposition itself incurs an update time of $\Ot{\eps^{-4}}$.
In contrast, we keep $\eps$ \emph{constant} and instead use the average sparsity of dense clusters directly as a parameter to classify vertices as sparser or denser.
Our method employs a two-pronged approach to recoloring denser vertices (inliers vs.\ outliers), with both steps relatively simple.
While \cite{BRW24} reasons about dense vertices through perfect matchings in the palette graph of \cite{ACK19}, we work directly with vertices and colors.
The augmenting paths in their palette graph are similar to our color-stealing mechanism.

A notable technical difference concerns the colorful matching: while \cite{BRW24} maintains a \emph{maximal} matching, we find that a sufficiently large matching suffices, allowing for simpler updates.

Recent years have seen a number of exciting results on $(\Delta+1)$-coloring across models, all made possible by the power of random color trials and, in most cases, with the help of Reed's structural decomposition \cite{Reed98}. We find particularly interesting that analyzing the \emph{average sparsity/density} of its clusters (inspired by \cite{HNT21,FGHKN23}) enabled us to improve on \cite{BRW24} down to a natural barrier. 
In particular, the inliers/outliers dichotomy for balancing the cost of coloring dense and sparse vertices may be of independent interest. 
\subsection{Notation}
For an integer $k \geq 1$, we write $[k] = \set{1, 2, \ldots, k}$. Throughout the paper, the input graph has a fixed vertex set $V = [n]$. The neighborhood of $v$ is $N(v)$ and we write $\deg(v) = |N(v)|$ its degree. For a set $S \subseteq V$, let $\NS(v) = N(v) \cap S$ be the set of neighbors in the set.
We assume that $\Delta$ is known in advance and that, at all times, every vertex has degree at most $\Delta$.
A non-adjacent pair of nodes $u,v$ is said to be an \emph{anti-edge}.

A \emph{partial coloring} is a function $\col : V \to [\Delta + 1] \cup \set{ \bot }$, where $\col(v) = \bot$ indicates that $v$ is not colored. For a set $S$ of nodes, let $\col(S) = \{\col(v) : v \in S\}$.
A coloring is \emph{total} when all vertices are colored, i.e., when $\bot \notin \col(V)$. The coloring is \emph{proper} if, for every edge $\set{u,v}$, either $\col(u) \neq \col(v)$ or $\bot \in \set{\col(u), \col(v)}$. In this paper, all colorings are implicitly proper. The \emph{palette} of $v$ is $L(v) = [\Delta + 1] \setminus \col(N(v))$, i.e., the set of colors not used by neighbors of $v$. The colors of the palette are also said to be \emph{available} (for $v$).

We say that an event holds ``with high probability'' --- often abridged to ``\whp'' --- when it holds with probability at least $1 - n^{-c}$ for a sufficiently large constant $c > 0$.

The algorithm frequently uses classic sets and operations that can be implemented efficiently using, say, a red-black tree.
They support insertions, deletions, and uniform sampling in $O(\log n)$ worst-case update time.

\section{The Sparser-Denser Decomposition}
\label{sec:prelims}

\label{sec:prelims-acd}
Our algorithm uses a modified version of the sparse-dense decomposition introduced by \cite{Reed98} and further extended in \cite{HSS18,ACK19} for $(\Delta+1)$-coloring in distributed and streaming models. For some small $\eps\in(0,1)$, these decompositions partition the vertices into a set of $\Omega(\eps^2 \Delta)$-sparse vertices and a number of dense clusters called $\eps$-almost-cliques.
We introduce now the key definitions of sparsity and almost-cliques.
\begin{definition}
    \label{def:sparsity}
    Vertex $v$ is $\zeta$-sparse if $G[N(v)]$ contains at most $\binom{\Delta}{2} - \zeta\Delta$ edges.
\end{definition}
\begin{definition}
    A set $D \subseteq V$ of vertices is an $\eps$-almost-clique if
    \begin{enumerate}
        \item \label[part]{part:acd-size}
        $|D| \leq (1+\eps)\Delta$, and
        \item \label[part]{part:acd-neighbors}
        every $v\in D$ has $|N(v) \cap D| \geq (1 - \eps)\Delta$.
    \end{enumerate}
\end{definition}

For a vertex $v$ in an almost-clique $D$, the set of its neighbors outside $D$ and the set of its non-neighbors
inside $D$
play a key role in our algorithm. We denote
$E(v) = N(v) \setminus D$ and $A(v) = D \setminus (N(v) \cup \set{v})$ the sets of \emph{external-} and \emph{anti-neighbors}, respectively. We use $e_v = |E(v)|$ and $a_v = |A(v)|$ to denote their size.

As explained earlier, we differentiate between sparse and dense vertices differently from \cite{Reed98,HSS18,ACK19,BRW24}. Namely, it suffices for us that an almost-cliques is dense \emph{on average}.
To make this precise, let $e_D = \sum_{v\in D} e_v/|D|$ and $a_D = \sum_{v\in D} a_v/|D|$ be the average external- and anti-degree in $D$, respectively. We can now state the definition of our sparser-denser decomposition.

\begin{definition}
    \label{def:refined-acd}
    Let $\eps \in(0,1)$, $\zeta \in [1,\Delta]$ and $G$ a graph with maximum degree $\Delta$. An $(\eps,\zeta)$-\emph{sparser-denser decomposition} of $G$ is a partition of its vertices into sets $S, D_1, \ldots, D_r$ such that
    \begin{enumerate}
        \item \label[part]{part:refined-sparsity}
        every $v\in S$ is $\zeta$-sparse, and
        \item \label[part]{part:refined-denser}
        every $D = D_i$ is an $\eps$-almost-clique such that $a_D + e_D \leq O( \zeta / \eps^2 )$.
    \end{enumerate}
\end{definition}
To compare our decomposition to that of \cite{Reed98,HSS18,ACK19}, we use their core decomposition with a constant $\eps$ but choose a sparsity parameter $\zeta$ and let $S$ be the $\Omega(\eps^2\Delta)$-sparse vertices along with the almost-cliques for which $a_D + e_D \gg_\eps \zeta$. This results only in $\zeta$-sparse vertices due to a structural results on the sparsity of dense vertices (see \cref{lem:dense-ext-sparsity}).
Henceforth, we refer to vertices of $S$ as to the \emph{sparser} vertices and vertices of $D$ as the \emph{denser} vertices.

As explained in \cref{sec:technical-overview}, we wish to color most of the denser vertices by iterating over their external- and anti-neighbors. The vertices for which this is feasible are called inliers:
\begin{definition}
    \label{def:inliers}
    For each $D = D_i$ in an $(\eps,\zeta)$-sparser-denser decomposition, define
    \[
    I_D = \set{ v\in D : e_v \leq 8 e_D \text{ and } a_v \leq 8 a_D }
    \quad\text{and}\quad
    O_D = D \setminus I_D \ .
    \]
    where vertices of $I_D$ (resp.\ $O_D$) are called \emph{inliers} (resp.\ \emph{outliers}).
\end{definition}
Note that by Markov's inequality, at least three quarters of each $D_i$ are inliers.

\subsection{Maintaining the Decomposition}

In this section, we explain how the fully dynamic algorithm of \cite{BRW24} for maintaining Reed's decomposition against an adaptive adversary can be adapted to our needs:
\begin{lemma}
    \label{prop:refined-part}
    There exists a universal constant $\delta \in (0,1)$ for which the following holds. Let $\eps \in (0, 1/6)$ and $1 \leq \zeta \leq \eps^2 \cdot \delta \Delta$.
    There exists a randomized fully dynamic algorithm that maintains array $part[1 \ldots n]$ that induces a partition of the vertices into sets $S = \set{ v : part[v] = \bot}$ and $D_i = \set{v :  part[v] = i}$ such that, \whp,
    \begin{itemize}
    \item $(S, D_1, \ldots, D_r)$ is an $(\eps,\zeta)$-refined partition,
\item for each $D_i \neq \emptyset$, it maintains the set $F_{D_i}$ of anti-edges in $G[D_i]$, and
    \item for each $v\in D_i$, it maintains the sets $A(v)$ and $E(v)$.
    \end{itemize}
    The amortized update time against an adaptive adversary is $O(\eps^{-4}\log n)$ with high probability.
\end{lemma}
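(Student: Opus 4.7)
The plan is to invoke the dynamic decomposition algorithm of \cite{BRW24} as a black-box subroutine and post-process its output so that it meets our refined requirements. Concretely, I would run BRW24's algorithm internally, which maintains a classical Reed-style partition (sparse vertices being $\Omega(\eps^2\Delta)$-sparse, the rest collected into $\eps$-almost-cliques) at amortized cost $O(\eps^{-4}\log n)$ against adaptive adversaries. Since by hypothesis $\zeta \leq \eps^2 \delta \Delta$, every Reed-sparse vertex is automatically $\zeta$-sparse and can be placed directly into our refined $S$ without further checks.

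Second, I would maintain simple auxiliary bookkeeping on top of BRW24's output. For each cluster $D$ kept by BRW24, store the counters $|D|$ and $s_D \eqdef \sum_{v\in D}(a_v+e_v)$; for each $v\in D$, store $A(v)$ and $E(v)$; and for each cluster, store the anti-edge set $F_D$. Using balanced BSTs, each edge insertion or deletion in $G$ triggers only $O(1)$ modifications to these structures, each costing $O(\log n)$ worst-case time. Whenever BRW24 internally moves a vertex between clusters or toggles its sparse/dense label, the auxiliary sets are updated consistently; since the total number of such events is bounded by BRW24's amortized analysis, the added bookkeeping fits inside their amortization.

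Third, I would refine the classification: a cluster $D$ produced by BRW24 is declared a refined dense cluster $D_i$ precisely when $s_D / |D| \leq C\zeta/\eps^2$ for a suitable constant $C$; otherwise its vertices are evicted into $S$. The correctness of this eviction is supplied by \cref{lem:dense-ext-sparsity}, which ties $\zeta$-sparsity of $v\in D$ to its contribution $a_v+e_v$. To bound the amortized cost, note that triggering an eviction requires $\Omega(\zeta |D|/\eps^2)$ updates, since each edge changes $s_D$ by only $O(1)$, while the eviction itself costs $O(|D|)$ work; reintegration, when the average density drops back below threshold, amortizes identically. The extra charge is $O(\eps^2/\zeta) = O(1)$ per update, which is absorbed by $O(\eps^{-4}\log n)$.

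The main obstacle is reconciling \cref{lem:dense-ext-sparsity} with the pointwise $\zeta$-sparsity required by \cref{part:refined-sparsity}: if the lemma delivers a per-vertex bound whenever $s_D/|D|$ is above threshold, the eviction is clean; if only an averaged statement is available, I would instead trigger a targeted re-run of BRW24's inner rebuild procedure on the vertex set of the over-dispersed cluster, separating genuinely $\zeta$-sparse vertices from a residual admissible almost-clique. Such targeted rebuilds are precisely the operations whose cost BRW24 already amortizes, so the overall $O(\eps^{-4}\log n)$ amortized update time and the high-probability correctness against adaptive adversaries carry over unchanged.
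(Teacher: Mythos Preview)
Your approach is essentially the paper's: run \cite{BRW24} with parameter $\Theta(\eps)$, piggyback the auxiliary sets $A(v)$, $E(v)$, $F_D$ on its updates, and evict to $S$ any cluster whose average $a_D+e_D$ exceeds $\Theta(\zeta/\eps^2)$.

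Two remarks. First, the hedging in your final paragraph is unnecessary: \cref{lem:dense-ext-sparsity} delivers exactly the \emph{pointwise} guarantee you want---every vertex in a cluster with $a_C+e_C$ above threshold is $\zeta$-sparse---so the eviction is clean and no targeted rebuild is needed. Second, your amortization of the eviction cost has a small gap: $s_D$ does not change by only $O(1)$ per update, because an internal vertex move by \cite{BRW24} can change $s_D$ by $\Theta(\eps\Delta)$. The paper sidesteps the whole issue: rather than rewriting $part[v]$ for all $v\in D$ when a threshold is crossed, one can simply maintain a single per-cluster flag (updated in $O(1)$ time whenever $s_D$ or $|D|$ changes) and read $part[v]$ as ``BRW24's cluster id of $v$, unless that cluster's flag says evicted''.
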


Our algorithm builds on an $\eps$-almost-clique decomposition, as defined by \cite{ACK19}: a vertex partition $\Vsparse, C_1, C_2, \ldots$
where the nodes in $\Vsparse$ have $\Omega(\eps^2 \Delta)$-sparsity and the $C_i$'s are $\Theta( \eps )$-almost-cliques. We filter out the $C_i$'s with sparsity more than $\zeta$ and add them into $\Vsparse$ to obtain the set $S$.
The key structural result for our partition is the following lemma:

\begin{lemma}
    \label{lem:dense-ext-sparsity}
    Let $\eps \in (0,1/75)$ and $\delta \in (0,1)$.
    Suppose $\Vsparse, C_1, C_2, \ldots, C_r$ is a vertex partition such that every vertex in $\Vsparse$ is $( \eps^2 \cdot \delta\Delta )$-sparse and each $C_i$ is an $\eps$-almost-clique. Then, every $v\in C_i$ such that $e_{C_i} + a_{C_i} \geq 16/(\delta\eps^2)$ is $\frac{\delta \eps^2}{50} ( a_{C_i} + e_{C_i} )$-sparse.
\end{lemma}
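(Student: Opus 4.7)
My plan is to decompose the sparsity of $v \in C = C_i$ as $\zeta_v \Delta = \mathrm{defect} + (\text{non-edges in } G[N(v)])$, where $\mathrm{defect} = \binom{\Delta}{2} - \binom{|N(v)|}{2} \geq 0$, and to lower-bound the two non-negative summands separately. For the non-edges, I will use $\alpha$, the number of anti-edges of $C$ that lie inside $N(v) \cap C$, and for two regimes also $\beta$, the non-edges between $N(v) \cap C$ and $E(v)$.

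For $\alpha$, let $|B| = \tfrac{1}{2}|C| a_C$ be the total anti-edges in $C$, and let $|B_T|$ count the anti-edges touching $T := A(v) \cup \{v\}$. The almost-clique properties $|C| \leq (1{+}\eps)\Delta$ and $|N(u) \cap C| \geq (1{-}\eps)\Delta$ force $a_u \leq 2\eps\Delta$ for every $u \in C$. The $a_v$ anti-edges $\{v, u\}$ ($u \in A(v)$) lie inside $T$ and are absorbed when passing from $\sum_{u \in T} a_u$ to $|B_T|$, giving $|B_T| \leq \sum_{u \in A(v)} a_u \leq 2\eps\Delta \cdot a_v$. Hence $\alpha \geq \tfrac{|C|a_C}{2} - 2\eps\Delta \cdot a_v$.

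For the defect, sum the inequality $|N(u) \cap C| + e_u \leq \Delta$ over $u \in C$: this yields $|C|(|C|{-}1{-}a_C) + |C|e_C \leq |C|\Delta$, i.e.\ $|C| \leq \Delta + 1 + a_C - e_C$. Substituting $|N(v)| = |C| - 1 - a_v + e_v$ gives $d := \Delta - |N(v)| \geq e_C - a_C + a_v - e_v$. Since $\mathrm{defect} = \tfrac{d(2\Delta - d - 1)}{2} \geq \tfrac{d\Delta}{2}$ for $d \leq \Delta - 1$, we obtain a lower bound on the defect in terms of the cluster averages.

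Combining the two bounds (with $|C| \geq (1{-}\eps)\Delta$) and simplifying, in the regime $d \geq 0$ we get
\[
\zeta_v \Delta \geq \tfrac{\Delta}{2}\bigl[e_C + a_v(1 - 4\eps) - \eps a_C - e_v\bigr].
\]
Using $e_v \leq e_C - a_C + a_v$ (which is exactly the regime condition $d \geq 0$), the bracket rearranges to $a_C(1-\eps) - 4\eps a_v$, and then the hypothesis $a_C + e_C \geq 16/(\delta \eps^2)$ together with $\eps < 1/75$ gives enough slack to absorb the error $4\eps a_v \leq 8\eps^2 \Delta$ and conclude $\zeta_v \geq \tfrac{\delta\eps^2}{50}(a_C + e_C)$. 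The complementary regime $d < 0$ forces $e_v - a_v > e_C - a_C$, so $v$ has above-average external degree; here I bring in $\beta$, using that each $w \in E(v)$ lying in another almost-clique $C_j$ satisfies $|N(w) \cap C| \leq \eps\Delta$ and therefore contributes $\geq (1-2\eps)\Delta$ non-neighbors in $N(v) \cap C$, which carries the bound.

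The hard part will be the case analysis that pieces these bounds together so that in every regime at least one of $\alpha$, $\mathrm{defect}$, $\beta$ provides the $\Omega(\delta\eps^2 (a_C+e_C)\Delta)$ contribution, while the $\Theta(\eps\Delta \cdot a_v)$ error terms are absorbed by the hypothesis $a_C + e_C \geq 16/(\delta\eps^2)$.
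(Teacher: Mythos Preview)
Your decomposition into $\mathrm{defect}$, $\alpha$, and $\beta$ has the right pieces, but the case analysis leaves two genuine gaps.

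First, in your $d \geq 0$ regime the final bound $\tfrac{\Delta}{2}\bigl[a_C(1{-}\eps) - 4\eps a_v\bigr]$ involves only $a_C$, not $e_C$. Take $C$ to be a clique of size $\Delta{+}1{-}s$ in which every vertex has exactly $e_v = s$ external neighbors, all lying in $\Vsparse$: then $a_C = a_v = 0$, $e_C = e_v = s$, $d = 0$, and your bound is identically zero, yet the lemma demands $\zeta_v \geq \tfrac{\delta\eps^2}{50}s$. The hypothesis $s \geq 16/(\delta\eps^2)$ cannot ``absorb'' anything here. A related issue arises even when $a_C > 0$: the error $4\eps a_v$ can be as large as $8\eps^2\Delta$, and absorbing it into $a_C(1{-}\eps)$ would require $a_C = \Omega(\eps^2\Delta)$, which the hypothesis does not provide. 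The paper sidesteps this by first invoking a separate fact (cited from the literature): $v$ is always $\tfrac{1-3\eps}{2}\,a_v$-sparse, so if $a_v \geq s/24$ one is done immediately, and otherwise $a_v \leq s/24$ is small enough that the error really is negligible.

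Second, your lower bound on $\beta$ only counts external neighbors lying in other almost-cliques and says nothing about $w \in E(v) \cap \Vsparse$. This is precisely where the hypothesis on $\Vsparse$ must enter --- it is the only source of the factor $\delta$ in the conclusion --- and in the clique example above all of $v$'s sparsity comes from such neighbors. The paper's key step is a density comparison: assuming $G[N(v)]$ has at least $\binom{\Delta}{2} - \tfrac{\delta\eps^2}{2} e_v\Delta$ edges while $G[N(w)]$ has at most $\binom{\Delta}{2} - \delta\eps^2\Delta^2$ edges (since $w \in \Vsparse$), at most $|N(v) \setminus N(w)|\cdot\Delta$ edges of the former can be absent from the latter, forcing $|N(v) \setminus N(w)| \geq \tfrac{\delta\eps^2}{2}\Delta$; summed over $E(v)$ this yields $\Omega(\delta\eps^2 e_v)$-sparsity. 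The paper's overall case split is therefore: either $e_v$ or $a_v$ is individually $\Omega(s)$ (handled by the two dedicated bounds above), or both are $\leq s/8$ and $\leq s/24$, in which case one argues directly --- via a degree-defect bound when $e_C \geq 2a_C$, and via your $\alpha$ bound when $a_C \gtrsim e_C$ (where now $a_v \leq a_C/8$ makes the $2\eps\Delta\,a_v$ error genuinely small).
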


\begin{proof}
    Let $v\in C = C_i$ and $s := e_C + a_C$ such that $s \geq 16/(\delta\eps^2)$.
    First, we argue that every $v$ with $e_v \geq 2/(\delta\eps^2)$ is $\max\set{\delta \eps^2/4 \cdot e_v, (1-3\eps)/2 \cdot a_v}$-sparse. In particular, if $e_v \geq s/8$ or $a_v \geq s/24$, then $v$ is at least $(\frac{ \delta \eps^2  }{ 50 } \cdot s)$-sparse.
    
    It was already proven in \cite[Lemma 6.2]{HKMT21} that $v$ is at least $(1 - 3\eps)/2 \cdot a_v$-sparse. Our definition of external degree, counting both sparse and dense neighbors outside $C$, differs from \cite{HKMT21} so we\footnote{A proof of this fact has already appeared in a technical report by Halld\'orsson, Nolin and Tonoyan \cite[Lemma 3]{HNT21}. Since it is a short and important proof, we include it here for completeness.} prove that $v$ is at least $\delta\eps^2/4 \cdot e_v$-sparse. We may assume that $G[N(v)]$ contains at least $\binom{\Delta}{2} - \delta\eps^2/2 \cdot e_v \Delta$ edges, for the claim otherwise trivially holds. By handshaking lemma, the number of edges in $G[N(v)]$ is
    \begin{equation}
        \label{eq:upperbound-GN-v}
    \frac{1}{2}\sum_{u\in N(v)} |N(u) \cap N(v)|
    \leq \frac{1}{2}\sum_{u\in N(v)} \Delta - |N(v) \setminus N(u)|
    \leq \binom{\Delta}{2} + \frac{\Delta}{2} - \sum_{u\in N(u)} |N(v) \setminus N(u)| \ .
    \end{equation}
    So we lower bound $|N(v) \setminus N(u)|$ for each $u \in E(v)$. If $u \in C_j \neq C$, then it has at most $\eps\Delta$ shared neighbors with $v$, so $|N(v) \setminus N(u)| \geq (1-\eps)\Delta$. If $u\in \Vsparse$, then $G[N(u)]$ contains at most $\binom{\Delta}{2} - \delta\eps^2\Delta$ many edges by definition. On the other hand, of the edges in $G[N(v)]$, at most $|N(v) \setminus N(u)|\Delta$ are not in $G[N(u)]$. Using our assumption on the number of edges in $G[N(v)]$, this means that $\binom{\Delta}{2} - \delta\eps^2/2 \cdot e_v\Delta \leq \binom{\Delta}{2} - \delta\eps^2 \Delta^2 + |N(v) \setminus N(u)| \Delta$, so that $|N(v) \setminus N(u)| \geq \delta\eps^2/2 \cdot \Delta$. Since all external neighbors contribute at least $\delta\eps^2/2\cdot \Delta$ to the sum in \cref{eq:upperbound-GN-v}, the number of edges in $G[N(v)]$ is at most $\binom{\Delta}{2} - (\delta\eps^2/2 \cdot e_v - 1/2)\Delta$. If $e_v \geq 2/(\delta\eps^2)$, this implies that $v$ is $(\delta\eps^2/4 \cdot e_v)$-sparse.

    We henceforth assume that $e_v \leq s/8$ and $a_v \leq s/24$. Now, there are two cases, depending on which of $e_C$ or $a_C$ dominates.

    If $e_C \geq 2a_C$, then $e_v \leq s/8 \leq e_C/4$. Since $\deg(v) +1 = |C| + e_v - a_v$ holds for all vertices in $C$, it also holds on average and we get that $\Delta + 1 \geq |C| + e_C - a_C \geq |C| + e_C/2$. Using $\deg(v)+1 = |C| + e_v - a_v$ again, we get $\deg(v) \leq |C| - 1 + e_v \leq \Delta - (e_C/2 - e_v) \leq \Delta - e_C/4$. Since a vertex with degree at most $\Delta - x$ is $(x/2)$-sparse, $v$ is $e_C/8$-sparse, which is $s/16$-sparse.

    If $e_C \leq 2a_C$, then $a_v \leq s/24 \leq a_C/8$. Observe that $C$ contains $a_C|C|/2$ anti-edges, of which at most $a_v \cdot 2\eps\Delta$ may have an endpoint out of $N(v) \cap C$. So $G[N(v) \cap C]$ contains at least $a_C|C|/2 - 2 a_v \eps\Delta \geq a_C\Delta/8$ anti-edges for $\eps < 1/2$. So $v$ is $a_C/8$-sparse which is $s/24$-sparse.
\end{proof}

To maintain our sparse-denser decomposition, we use the algorithm of \cite{BRW24} for the almost-clique decomposition.

\begin{proposition}[{\cite[Theorem 4.1]{BRW24}}]
    \label{prop:acd}
    For every $n \gg 1$ and $\hateps \in (0, 3/50)$, there exists a fully dynamic randomized algorithm that maintains a vertex partition $\Vsparse, C_1, \ldots, C_r$ of any $n$-vertex graph with maximum degree $\Delta$ such that, \whp,
    \begin{enumerate}
        \item vertices of $\Vsparse$ are $\Omega( \hateps^2\Delta )$-sparse,
        \item each $C_i$ is a $(10 \hateps)$-almost-clique,
        \item for every vertex it maintains the sets $N(v) \cap \Vsparse$ and $N(v) \cap (C_1 \cup \ldots \cup C_r)$, and
        \item for every almost-clique $C_i$, it maintains the set $F_i$ of anti-edges in $C_i$.
    \end{enumerate}
   The amortized update time against an adaptive adversary is $O(\hateps^{-4}\log n)$ with high probability.
\end{proposition}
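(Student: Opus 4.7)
The plan is to combine a fast randomized construction of the almost-clique decomposition with a per-vertex lazy reclassification scheme, exploiting the robustness of the decomposition to small perturbations.

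\textbf{Static construction.} I would begin with a static algorithm following the template of \cite{ACK19,HSS18}. Each vertex $v$ draws a random sample $R_v \subseteq N(v)$ of size $O(\hateps^{-2}\log n)$, which suffices to estimate both $v$'s sparsity and pairwise overlaps $|N(u)\cap N(v)|$ within $\pm\, \hateps^2 \Delta$ additive error \whp. Vertices whose estimated sparsity exceeds an appropriate threshold are placed in $\Vsparse$; the remaining \emph{dense} vertices form the vertex set of a \emph{friendship graph} in which $u,v$ are joined whenever their estimated overlap is at least $(1-O(\hateps))\Delta$. The clusters $C_1,\ldots,C_r$ are defined as the connected components of this friendship graph. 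A standard analysis (similar in spirit to \cref{lem:dense-ext-sparsity}) shows that \whp\ each $C_i$ is a $(10\hateps)$-almost-clique and $\Vsparse$ contains only $\Omega(\hateps^2 \Delta)$-sparse vertices. Processing one vertex takes $\tilde O(\hateps^{-2}\Delta)$ time to draw and cross-reference signatures.

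\textbf{Lazy reclassification and amortization.} Each vertex $v$ carries a counter that increments on every edge update incident to $v$. Upon reaching $T = \Theta(\hateps^2 \Delta)$ local updates, $v$ is \emph{reclassified}: its sparsity and friendships are recomputed using freshly drawn randomness, its cluster assignment is updated, and the counter is reset. Between reclassifications, the label of $v$ is frozen. Correctness reduces to a robustness lemma: starting from a $(9\hateps)$-ACD with a sparsity slack of $2\hateps^2 \Delta$, the sparsity of a vertex drifts by at most $O(T) = O(\hateps^2 \Delta)$ over $T$ of its local updates, and its overlap with cluster-mates drifts by at most $O(T) = O(\hateps \Delta)$, which keeps the $(10\hateps)$-ACD guarantee intact. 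Amortizing the $\tilde O(\hateps^{-2}\Delta)$ reclassification cost over $T = \Theta(\hateps^2 \Delta)$ local updates yields $\tilde O(\hateps^{-4})$ per vertex endpoint, hence $\tilde O(\hateps^{-4}\log n)$ per edge update. The auxiliary sets $N(v)\cap \Vsparse$, $N(v)\cap(C_1\cup\cdots\cup C_r)$, and the anti-edge sets $F_i$ are stored in balanced search trees: an ordinary edge update incurs $O(\log n)$ changes to these, while a reclassification induces $\tilde O(\hateps^{-2}\Delta)$ moves at neighbors of $v$, absorbed in the reclassification charge.

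\textbf{Main obstacle.} The principal difficulty is the adaptive adversary. Since the randomness $R_v$ drives both the sparsity estimates and the friendship edges, an adversary who could deduce $R_v$ from past decisions might route updates to force cascading misclassifications. The resolution is that each reclassification of $v$ draws a fresh $R_v$ \emph{independent of all previously used randomness}, and the adversary's strategy is a function only of the revealed output (the partition and the auxiliary sets), not of the internal sampling bits; by a standard adaptive-adversary argument, this independence justifies Chernoff bounds on the new estimates. The hardest technical step will be making the robustness lemma fully rigorous under adversarial scheduling, showing that between any two reclassifications of $v$ the cumulative drift in sparsity and in overlaps with cluster-mates stays within tolerance regardless of which edges the adversary inserts or deletes elsewhere in the graph.
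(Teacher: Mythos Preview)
The paper does not prove this proposition. It is quoted verbatim as \cite[Theorem 4.1]{BRW24} and used as a black box in the proof of \cref{prop:refined-part}; the only work the present paper does on top of it is to filter almost-cliques by average sparsity and to piggyback the maintenance of $E(v)$ and $A(v)$ on the updates that \cref{prop:acd} already performs. So there is no ``paper's own proof'' to compare your proposal against.

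Your sketch is a reasonable outline of the kind of argument one expects in \cite{BRW24}: sampling-based sparsity and overlap estimation, lazy per-vertex reclassification after $\Theta(\hateps^2\Delta)$ incident updates, and fresh randomness at each reclassification to neutralize the adaptive adversary. Whether it matches \cite{BRW24} in detail (e.g., the exact clustering rule, the robustness constants, or how cascading cluster merges/splits are handled) cannot be assessed from this paper. If you intend to actually reprove the result rather than cite it, the step most in need of care is not the adaptive-adversary independence argument you flag, but the structural claim that connected components of the friendship graph are $(10\hateps)$-almost-cliques of size at most $(1+10\hateps)\Delta$: without a diameter or transitivity argument, components could in principle be long chains of pairwise-overlapping neighborhoods, and bounding their size is where the real work lies.
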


We can now prove that a sparser-denser decomposition can indeed be maintained.

\begin{proof}[Proof of \cref{prop:refined-part}]
    Define $\hateps = \eps/10$ and let $\delta$ be the universal constant such that vertices of $\Vsparse$ are $(\eps^2 \cdot \delta\Delta)$-sparse in \cref{prop:acd}.
    We use the algorithm of \cref{prop:acd} with $\hateps = \eps/10$ and define $S$ as the set containing all vertices of $\Vsparse$ and the vertices from almost-cliques $C = C_i$ with $a_C + e_C \geq \frac{50}{\delta \eps^2} \zeta$.
    Since we assume in \cref{prop:refined-part} that $\zeta \leq \eps^2 \cdot \delta\Delta$, every vertex of $\Vsparse$ is $\zeta$-sparse. Every vertex of $S \setminus \Vsparse$ is $\zeta$-sparse by \cref{lem:dense-ext-sparsity}. The remaining $\eps = (10\hateps)$-almost-cliques are renamed $D_i$ and verify that $a_{D_i} + e_{D_i} \leq \frac{50}{\delta \eps^2} \zeta = O(\zeta/\eps^2)$ by construction. So the resulting decomposition is indeed an $(\eps,\zeta)$-sparser-denser decomposition.

    The algorithm of \cref{prop:acd} already maintains the set of anti-edges in each $D_i$, so it only remains to explain how we maintain sets of external- and anti-neighbors for each denser vertex. Every time the algorithm of \cref{prop:acd} modifies an element in $N(v) \cap \Vsparse$ or $N(v) \cap (C_1 \cup \ldots \cup C_r)$, we inspect the change and modify $E(v)$ accordingly. Similarly, when the algorithm modifies a set of anti-edges $F_i$ by inserting or deleting some anti-edges $\set{u,v}$, we modify the sets $A(v)$ and $A(u)$ accordingly. Since each inspection (and possible modification) takes $O(\log n)$ time, it does not affect the complexity of the algorithm by more than a constant factor.
\end{proof}

\subsection{Clique Palette, Colorful Matching \& Accounting Lemma}
\label{sec:colorful-matching}

Similarly to \cite{ACK19,FGHKN24}, we wish to assign dense vertices colors that are unused by vertices of their almost-clique. Together these form the \emph{clique palette}.
\begin{definition}
    \label{def:clique-palette}
    For any (possibly partial) coloring $\col$ of $G$, and almost-clique $D$, the \emph{clique palette} $L(D)$ is the set of colors unused by vertices of $D$. More formally, $L(D) = [\Delta + 1] \setminus \col(D)$.
\end{definition}

To ensure the clique palette always contains colors, we give the same color to a pair of vertices in the almost-clique.
We say a color is \emph{repeated} or \emph{redundant} in $D \subseteq V$ when at least two vertices of $D$ hold it. Assadi, Chen, and Khanna \cite{ACK19} introduced the notion of a \emph{colorful matching}: if $M$ colors are repeated in $D$, there exists an $M$-sized anti-matching in $G[D]$ where the endpoints of a matched anti-edge have the same color.

Each redundant color in $D$ gives its vertices some slack, while using the clique palette forbids at most one color for each anti-neighbor. Accounting for both, we obtain the following bound on the number of colors in the clique palette available for each vertex:
\begin{lemma}[Accounting Lemma]
    \label{prop:accounting}
    Let $\col$ be a (possibly partial) coloring and $D$ an almost-clique with a colorful matching of size $M$. For every $v\in D$,
    the number of colors available for $v$ in the clique palette is at least
    \[
        |L(D) \cap L(v)|
        \geq \Delta - \deg(v) + M - a_v + [\#\text{uncolored vertices in $D \cup E(v)$}] \ .
    \]
\end{lemma}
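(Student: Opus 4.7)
The plan is to identify the set of colors \emph{excluded} from $L(D) \cap L(v)$ and upper bound its size, then subtract from $\Delta + 1$.

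First, I would observe that $[\Delta+1] \setminus (L(D) \cap L(v)) = \col(D) \cup \col(N(v))$. Since $N(v) = (N(v) \cap D) \cup E(v)$ and $N(v) \cap D \subseteq D$, this equals $\col(D \cup E(v))$. So
\[
|L(D) \cap L(v)| = (\Delta + 1) - |\col(D \cup E(v))|.
\]

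Next, I would compute $|D \cup E(v)|$. Since $D$ is partitioned into $\{v\}$, $N(v) \cap D$, and $A(v)$, we have $|D| = 1 + (\deg(v) - e_v) + a_v$. As $E(v) \cap D = \emptyset$, this gives
\[
|D \cup E(v)| = |D| + e_v = 1 + \deg(v) + a_v.
\]

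The main step is bounding $|\col(D \cup E(v))|$. Letting $u$ denote the number of uncolored vertices in $D \cup E(v)$, the set $\col(D \cup E(v))$ is the image of $|D \cup E(v)| - u$ colored vertices under $\col$. The colorful matching provides $M$ anti-edges $\{x_i, y_i\}$ in $G[D]$ with $\col(x_i) = \col(y_i)$; both endpoints are necessarily colored. Hence each matched pair contributes one color but two vertices, so
\[
|\col(D \cup E(v))| \leq (|D \cup E(v)| - u) - M.
\]
Any further repetitions (e.g.\ between $E(v)$ and $D$, or among external neighbors) only strengthen the inequality. Substituting the expression for $|D \cup E(v)|$ into the first display yields
\[
|L(D) \cap L(v)| \geq (\Delta + 1) - (1 + \deg(v) + a_v) + M + u = \Delta - \deg(v) + M - a_v + u,
\]
which is the claimed bound.

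There is no real obstacle here: it is pure counting. The one point that requires care is the non-double-counting in the third step, namely that the $M$ savings from the colorful matching and the $u$ savings from uncolored vertices are disjoint contributions. This holds because matched vertices hold colors by definition, so they are not counted among the uncolored. Everything else is bookkeeping using $|D| = 1 + (\deg(v) - e_v) + a_v$.
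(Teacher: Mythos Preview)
Your proof is correct and follows essentially the same approach as the paper: both identify $|L(D)\cap L(v)| = (\Delta+1) - |\col(D\cup E(v))|$, bound $|\col(D\cup E(v))|$ via the savings from the $M$ matched pairs and the uncolored vertices, and close with the identity $|D| + e_v = \deg(v) + 1 + a_v$. The only cosmetic difference is that the paper phrases the key inequality as $|D\cup E(v)| \geq k + |\col(D\cup E(v))| + M$ by partitioning $D\cup E(v)$ into color classes, whereas you state the equivalent bound $|\col(D\cup E(v))| \leq (|D\cup E(v)| - u) - M$ directly.
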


\begin{proof}
    For succinctness, let $k$ denote the number of uncolored vertices in $D\cup E(v)$. Observe how the number of vertices in $D \cup E(v)$ relates to the size of the colorful matching by partitioning vertices according to their colors:
    \begin{equation}
        \label{eq:accounting}
        |D \cup E(v)|
        = k + \sum_{\chi \in [\Delta+1]} |\set{ v \in D \cup E(v): \col(v) = \chi}|
        \geq
        k
        + |\col(D \cup E(v))| + M \ ,
    \end{equation}
    where the last inequality comes from the fact that at least $M$ colors are used by at least two vertices in $D$.
    Comparing the number of colors in the clique palette available to $v$ to the size of $D$ shows the contribution of redundant colors and uncolored vertices:
    \begin{align*}
        |L(D) \cap L(v)|
        = \Delta + 1 - |\col(D \cup E(v))|
        &= \Delta + 1 - (|D| + e_v) + |D \cup E(v)| - |\col(D \cup E(v))| \\
        &\geq \Delta + 1 - |D| - e_v + k + M \ . \tag{by \cref{eq:accounting}}
    \end{align*}
    To complete the proof, observe that $\deg(v) + 1 = |D| + e_v - a_v$, so that $\Delta + 1 - |D| - e_v$ simplifies to $(\Delta - \deg(v)) - a_v$. Hence, the lemma.
\end{proof}
 \section{The Dynamic Algorithm}
\label{sec:dynamic}

In this section, we prove \cref{thm:main}. We begin with the definition of phases and describe the properties of the coloring at the beginning of each phase. We then explain how the algorithm maintains a proper coloring over one phase. The analysis has three parts: first, we show that the algorithm maintains some invariants 
throughout the phase
(\cref{sec:invariants}); second, we prove that recoloring denser vertices is efficient (\cref{sec:denser}); finally, we analyze the re-coloring of sparser vertices (\cref{sec:sparser}). Together, they imply \cref{thm:main} (\cref{sec:proof-theorem}).

When $\Delta \leq O(n^{2/3})$, the naive deterministic algorithm that scans the entire neighborhood of vertices before recoloring them (when necessary) has worst-case update time $O(\Delta) \leq O(n^{2/3})$. We henceforth assume that $\Delta \geq \Omega( \zeta/\eps^2 ) = \Omega(n^{2/3})$, for a sufficiently large constant where $\eps := 1/110$ and $\zeta := n^{2/3}$ are the parameters of the sparser-denser decomposition.

\paragraph{Phases \& Fresh Coloring.}
We divide updates into \emph{phases} of $t$ insertions and deletions and analyse the algorithm phase by phase.
At the beginning of each phase, we compute a fresh coloring of the graph.
\begin{quote}
    A phase is $t := \gamma \cdot \zeta$ updates long, where $\gamma$ is the constant from \cref{prop:fresh-coloring}.
\end{quote}
The algorithm recolors at most one vertex in $S$ per update, so the adversary can only remove one color per update from $\LS(v) := [\Delta+1] \setminus \col[\NS(v)]$.
By recomputing a fresh coloring every $t$ updates, we ensure that sparser vertices always have $t = \gamma \cdot \zeta$ colors available,  as they start the phase with more than $3t$ available colors.

\begin{restatable}{proposition}{PropFreshColoring}
    \label{prop:fresh-coloring}
    There exists a universal constant $\gamma \in (0,1)$ such that the following holds.
    For $\zeta \geq \Omega( \log n )$ for some large hidden constant, let $S, D_1, \ldots, D_r$ be a $(\eps, \zeta)$-decomposition of an $n$-vertex graph with maximum degree $\Delta$.
    W.h.p., \FreshColoring runs in $\Ot{ n + n^2 / \zeta }$ time, producing a total coloring $\col$ such that
    \begin{itemize}
        \item (Sparse Slack) every sparse vertex $v$ has $|\LS(v)| \geq 3 \gamma \cdot \zeta$,
\item (Sparse Balance) $|\Phi[\chi] \cap S| = O( n / \zeta + \log n)$ for every $\chi \in [\Delta + 1]$,
        \item (Dense Balance) in each $D_i$, every color is used at most twice, and
        \item (Matching) each $D_i$ has at least $\floor*{ 8 a_{D_i} }$ redundant colors.
    \end{itemize}
\end{restatable}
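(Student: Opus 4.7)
The plan is to build the fresh coloring in three stages mirroring the standard distributed/streaming blueprint for $(\Delta+1)$-coloring on top of the sparser-denser decomposition: (a) build a colorful matching of size $\lfloor 8a_{D_i}\rfloor$ inside each cluster $D_i$ by random color trials on anti-edges; (b) complete the coloring of each $D_i$ by sampling random colors from the clique palette and rejecting any color already used twice in $D_i$; (c) color the sparser vertices one by one using rejection sampling that also enforces the balance constraint $\mathrm{cnt}[\chi]<B:=C(n/\zeta+\log n)$. Once these stages succeed, Matching, Dense Balance, and Sparse Balance hold by construction, leaving only Sparse Slack to verify.

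\emph{Correctness and efficiency of the three stages.} In Stage~1, because $a_{D_i}+e_{D_i}=O(\zeta/\eps^2)\ll\Delta$, at most a constant fraction of $D_i$ is colored at any time, so a uniformly random color lands outside $\col(D_i)$ with constant probability, yielding $\Ot{a_{D_i}}$ work per cluster and $\Ot{n}$ overall. In Stage~2, the Accounting Lemma gives $|L(D_i)\cap L(v)|=\Omega(\Delta)$, so a balls-in-bins analysis shows that $O(\log n)$ trials per vertex suffice \whp{} to avoid exceeding color-multiplicity two, again totaling $\Ot{n}$. In Stage~3, sampling from $[\Delta+1]$ succeeds with constant probability once popular colors (at most $|S|/B=O(\zeta)$) and $\col(N(v))$ together exclude only a constant fraction of the palette; each membership check $\chi\in\col(N(v))$ is done by walking $\Phi[\chi]$, of size $\Ot{n/\zeta}$ under the invariants, giving $\Ot{n^2/\zeta}$ for the stage.

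\emph{Main obstacle: Sparse Slack.} The hard step is proving $|L_S(v)|\ge 3\gamma\zeta$ for every sparser $v$. The argument starts from $\zeta$-sparsity, which provides at least $\zeta\Delta$ non-edges in $G[N(v)]$: for each such non-edge $\{x,y\}$, both endpoints receive random colors during Stages~1--3 and $\Pr[\col(x)=\col(y)]=\Omega(1/\Delta)$, giving $\Omega(\zeta)$ color collisions in expectation and hence $\Omega(\zeta)$ extra colors in $L_S(v)$. The subtlety is that Stage~3 uses rejection sampling, so its assigned colors are only \emph{conditionally} uniform; I would address this by coupling each successful trial with an unconditional uniform draw---a coupling valid \whp{} because each Stage~3 trial succeeds with constant probability---and then applying a Chernoff bound (which applies since $\zeta=\Omega(\log n)$) to upgrade the expectation into a \whp{} lower bound of $3\gamma\zeta$ for sufficiently small $\gamma$.
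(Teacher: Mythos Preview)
Your Sparse Slack argument does not work, and this is the central gap. You want $\Omega(\zeta)$ pairs of non-adjacent neighbors of $v$ to receive the same color, and for that you need those colors to behave like independent uniform draws from $[\Delta+1]$. But in your Stage~3 a vertex $x$ is colored by rejection sampling against $\col(N(x))$: its final color is uniform over $L(x)\setminus\{\text{popular}\}$, a set determined by the colors of all of $x$'s neighbors, which depend on theirs, and so on. Conditioning on a successful trial does not give back an unconditional uniform draw; it gives a draw uniform over the \emph{available} set, which can miss a constant fraction (indeed up to $\Delta$) of $[\Delta+1]$. No coupling fixes this, because the very dependence you need absent is what rejection sampling creates. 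This is why the paper runs \OneShotColoring on $S$ \emph{first and alone}: every sparse vertex independently tries a uniform color with probability $1/8$, conflicts are then removed by uncoloring both endpoints, and the classical slack-generation lemma applies directly to these genuinely independent trials. The slack then persists because later stages extend the partial coloring without recoloring sparse vertices.

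Two further claims also fail. In Stage~3 you assert a random color succeeds with constant probability because ``$\col(N(v))$ excludes only a constant fraction of the palette''; but $|\col(N(v))|$ can be as large as $\Delta$, so a sparse vertex colored late may have only one available color. The paper resolves this by processing uncolored sparse vertices in a \emph{random permutation}, so that the $i$-th vertex from the end has $\Omega(i)$ uncolored neighbors \whp, and the total number of trials telescopes to $\Ot{n}$. In Stage~2, $|L(D_i)\cap L(v)|=\Omega(\Delta)$ holds only while $\Omega(\Delta)$ vertices of $D_i$ remain uncolored; near the end the Accounting Lemma gives only $M-a_v+1$, which is $\ge 1$ for inliers but can be nonpositive for outliers. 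The paper therefore colors outliers first (while all inliers are still uncolored) and only then the inliers.
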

The description of \FreshColoring and the proof of \cref{prop:fresh-coloring} are deferred to \cref{sec:fresh} to preserve the flow of the paper.
Henceforth, we say that ``\FreshColoring succeeded'' if it ends and the coloring it produces has all the aforementioned properties.
Throughout each phase, we maintain the Dense Balance and Matching properties given by \FreshColoring as invariants --- and refer to them as the \emph{Dense Balance Invariant} and \emph{Matching Invariant}.

During a phase, it helps to keep the sparser-denser partition fixed. A simple adaptation of the algorithm from \cref{prop:refined-part} allows us to make this assumption. We emphasize that during a phase, the adversary may increase or decrease external- and anti-degrees of vertices and their averages. However, given the phase length, it does not materially affect the decomposition.

\begin{lemma}
    \label{lem:dyn-refined-part}
    Let $\delta$, $\eps$ and $\zeta$ be as in \cref{prop:refined-part}.
    There is an algorithm with $O(\eps^{-4} \log n)$ amortized update time that maintains a vertex partition such that
    \begin{itemize}
        \item at the beginning each phase, the partition is an $(\eps,\zeta)$-decomposition,
        \item the vertex partition does not change during phases,
        \item at all times, denser vertices have $e_v \leq 2\eps\Delta$ and $a_v \leq 3\eps\Delta$, and
        \item for each $D = D_i$, the values of $a_D$ and $e_D$ increase or decrease by at most one within a phase.
\end{itemize}
\end{lemma}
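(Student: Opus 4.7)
The plan is to run the dynamic algorithm of \cref{prop:refined-part} continuously in the background but to expose only a \emph{snapshot} of its partition at each phase boundary, keeping that snapshot fixed for the remainder of the phase. The background algorithm continues to absorb edge updates (so that a correct $(\eps,\zeta)$-decomposition is available at the next phase boundary), while our wrapper separately maintains the sets $E(v), A(v), F_{D_i}$ with respect to the \emph{frozen} partition. Since an edge update affects only its two endpoints and each set is stored in a red-black tree, this bookkeeping is $O(\log n)$ per update and does not change the asymptotic amortized cost; the $O(\eps^{-4}\log n)$ bound is thus inherited directly from \cref{prop:refined-part}.

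For the third bullet, I would use the fact that at the start of the phase each denser $v \in D_i$ satisfies the $\eps$-almost-clique property $|N(v) \cap D_i| \geq (1-\eps)\Delta$, which gives $e_v \leq \eps\Delta$ and $a_v \leq |D_i| - 1 - (1-\eps)\Delta \leq 2\eps\Delta$ at phase start. Every edge update shifts $e_v$ or $a_v$ by at most one at each of its two endpoints, so after $t = \gamma\zeta$ updates within the phase both quantities grow by at most $t$. Since the algorithm operates in the regime $\Delta \geq \Omega(\zeta/\eps^2)$ with $\eps$ constant, the constant $\gamma$ from \cref{prop:fresh-coloring} can be chosen small enough to guarantee $t \leq \eps\Delta$, and so $e_v \leq 2\eps\Delta$ and $a_v \leq 3\eps\Delta$ hold throughout the phase.

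For the fourth bullet, I would observe that a single edge update changes $\sum_{v\in D_i}e_v$ or $\sum_{v\in D_i}a_v$ by at most $2$ (the change is localized to the endpoint(s) of the inserted or deleted edge). Hence $e_{D_i}$ and $a_{D_i}$ each drift by at most $2/|D_i|$ per update, and by at most $2t/|D_i| \leq 2\gamma\zeta/((1-\eps)\Delta)$ over the full phase, using $|D_i| \geq (1-\eps)\Delta + 1$. Choosing $\gamma$ small relative to the hidden constant in $\Delta = \Omega(\zeta/\eps^2)$ makes this drift at most $1$, as required.

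I do not expect a serious obstacle: the proof is essentially an accounting of slack in the $(\eps,\zeta)$-decomposition against the phase length. The only thing to be careful about is the choice of a single $\gamma$ that simultaneously makes $t \leq \eps\Delta$, $2t \leq (1-\eps)\Delta$, and satisfies the hypotheses of \cref{prop:fresh-coloring}; since $\eps$ is a fixed constant and $\Delta = \Omega(\zeta/\eps^2)$, all three constraints reduce to an inequality of the form $\gamma \leq c\eps^2$ for some explicit universal constant $c$.
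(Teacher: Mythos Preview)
Your proposal is correct and matches the paper's approach: freeze the partition from \cref{prop:refined-part} for the duration of a phase, then use $t \leq \zeta \leq \eps\Delta$ to bound the drift in $e_v$, $a_v$, $e_D$, $a_D$. The paper implements the freeze by stashing updates and flushing them at phase boundaries rather than running \cref{prop:refined-part} continuously and snapshotting, but this is cosmetic; one minor simplification is that you need not tune $\gamma$ here at all, since $t = \gamma\zeta \leq \zeta \leq \eps^2\delta\Delta \leq \eps\Delta$ already holds from the hypothesis of \cref{prop:refined-part}.
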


\begin{proof}
    Use \cref{prop:refined-part} to maintain a sparser-denser decomposition with $O(\eps^{-4}\log n)$ update time except that, during phases, instead of processing updates, we store them in a stash. When a new phase begins, we process all updates stored in the stash and obtain an $(\eps,\zeta)$-refined partition of the graph at that time. Clearly the vertex partition is fixed during phases. Now, observe that at the beginning of a phase, since every $D = D_i$ is an $\eps$-almost-clique, the denser vertices have $e_v \leq \eps\Delta$ and $a_v \leq 2\eps\Delta$. Since the adversary can insert/delete at most $t \leq \zeta \leq \eps\Delta$ edges per phase, we always have that $e_v \leq \eps\Delta + t \leq 2\eps\Delta$ and $a_v \leq 2\eps\Delta + t \leq 3\eps\Delta$. Similarly, the adversary may increase/decrease the sums of external- and anti-degrees by at most $t \leq \eps\Delta$, hence the values of $a_D$ and $e_D$ increase/decrease by up to an additive $t/|D| \leq 2\eps$ term over the course of each phase.
\end{proof}

\paragraph{The Dynamic Algorithm.}
Updates are triggered by calls to \Insert and \Delete. They maintain the sparser-denser partition as described in \cref{lem:dyn-refined-part} and, when a new phase begins, recompute a coloring from scratch using \FreshColoring. During phases, \Insert and \Delete update the data-structures and respectively call \RecolorInsert and \RecolorDelete to handle color conflicts.

Upon insertion of $\set{u,v}$, \RecolorInsert verifies if it creates a color conflict (line \ref{line:recolor-insert-test-conflict}), in which case it uncolors $u$. In most cases, it suffices to recolor $u$ using \RecolorSparse or \RecolorDense depending on whether $v\in S$ or not (lines \ref{line:begin-recolor-endpoint} to \ref{line:end-recolor-endpoint}). The one exception to this concerns \emph{matched vertices}. A denser vertex is \emph{matched} if and only if its color is repeated by another vertex of its almost-clique. Maintaining sufficiently many matched vertices in each almost-clique (see Matching Invariant) is necessary for \RecolorDense. If, after an update, there are not enough matched vertices --- either because the almost-clique lost an anti-edge (line \ref{line:check-kill-anti-edge}) or because $a_D$ increased (line \ref{line:delete-if-matching-inv}) --- we call \AddAntiEdgeMatching{D}, which creates a new redundant color in $D$. To same-color pairs of (non-adjacent) vertices in $D$, we use \RecolorMatching. It proceeds similarly in recoloring sparse vertices and outliers: random color trials in $[\Delta+1]$ and possibly stealing a color from an inlier in $D$.

\begin{algorithm}
    \caption{Updates handlers during a phase, when $\Delta \geq \Omega( n^{2/3} )$}
    \label{alg:del}
    \label{alg:insertion}
    \Fn{\RecolorDelete{u,v}}{
        \lIf*{$part[u] = part[v] = i$ and $|M_{D_i}| < \floor{ 8 a_{D_i} }$} {\AddAntiEdgeMatching{$D_i$}
        \label{line:delete-if-matching-inv}}
    }

    \Fn{\RecolorInsert{u,v}}{

        \If{$\col[u] = \col[v]$ \label{line:recolor-insert-test-conflict}}{
            \Color{u, $\bot$} \tcp*[f]{uncolor $u$}\;
            \If{$matched[u] = w \neq \bot$}{\label{line:check-kill-anti-edge}
                $matched[u], matched[w] \gets \bot$
                    \tcp*[f]{unmatch vertices}\;
                $M_D \gets M_D - \col[w]$\tcp*[f]{remove the color from the matching}\;
            }
        }

        \lIf{$part[u] = part[v] = i$ and $|M_{D_i}| < \floor{ 8 a_{D_i} }$} {\AddAntiEdgeMatching{$D_i$} \label{line:insert-if-matching-inv}}

        \If{$\col[u] = \bot$}{
            \label{line:begin-recolor-endpoint}
            \lIf{$u \in S$}{
                \RecolorSparse{u}
            }\lElseIf{$matched[u] = w \neq \bot$}{
                \RecolorMatching{u, w}
                \label{line:insert-calls-recolor-matching}
            }
            \lElse*{
                \RecolorDense{u}
                \label{line:end-recolor-endpoint}
            }
        }
    }
\end{algorithm}

Let us list the data-structures we use. The coloring is stored as an array $\col[1\ldots n]$ such that $\col[v]$ is the color of $v$. We also maintain the color class $\Phi[\chi] = \set{ v\in V: \col[v] = \chi}$ for each $\chi \in [\Delta + 1]$. When we change the color of a vertex, we use the function \Color{v, $\chi$} that sets $\col[v]$ to $\chi$ and updates color classes accordingly.
To keep track of redundant colors, we maintain:
\begin{itemize}
    \item an array $matched[1 \ldots n]$ such that $matched[u] = v \neq \bot$ if $u$ and $v$ are in the same almost-clique and colored the same, otherwise $matched[u] = \bot$;
    \item for each almost-clique $D_i$, the set $M_{D_i}$ of repeated colors in $D_i$.
\end{itemize}

We now explain how the recoloring procedures used in \cref{alg:insertion} work for sparser, denser and matched vertices.

\paragraph{Sparser Vertices.}
For recoloring a vertex in $S$, we sample random colors in $[\Delta+1]$ and recolor the vertex with the first color that is not used by neighbors of $S$ and used by at most one denser neighbor. As we shall see, there are at least $\gamma \cdot \zeta$ such colors with high probability\footnote{For convenience, if no available color exist, the algorithm loops forever. As we explain later, this occurs with probability $1/\poly(n)$. If one wishes the algorithm to have finite expected update time, the algorithm can be modified to restart a phase if no available color was found after $\Ot{ n^{1/3} }$ samples.}, so $\Ot{ \Delta /\zeta }$ color tries suffice with high probability. To verify if a color is available, the algorithm loops through its color class (line \ref{line:recol-sparse-test}). Finally, if the color was used by a \emph{unique} denser neighbor $w$ (line \ref{line:recol-sparse-steal}), we steal the color from $w$ and recolor the denser vertex using \RecolorDense.

\begin{algorithm}
    \caption{For a sparse vertex $v$}
    \label{alg:recolor-sparse}
    \Fn(\tcp*[f]{$\col[v] = \bot$ and $v\in S$}){\RecolorSparse{v}}{
        Sample $\chi \in [\Delta + 1]$ and $w \gets \bot$\label[line]{line:recol-sparse-sample}\;
        \For{$u\in \Phi[\chi]$ \label{line:recolor-sparse-iterate-color-class}}{
            \lIf{$u \in \NS(v)$ or ($u\in \ND(v)$ and $w \neq \bot$)}{go back to line \ref{line:recol-sparse-sample} \label{line:recol-sparse-test}}
            \lElseIf{$u\in \ND(v)$ and $w = \bot$}{$w \gets u$}
        }
        \If{$w \neq \bot$\label{line:recol-sparse-steal}}{
            \Color{w, $\bot$},
            \Color{v, $\chi$} \tcp*[f]{steal from $w$}\;
            \lIf{$matched[w] = \bot$}{ \RecolorDense{w} }
            \lElse{ \RecolorMatching{w, matched[w]} }
        }
        \lElse{
            \Color{v, $\chi$}
        }
    }
\end{algorithm}

\paragraph{Denser Vertices.}
We now explain how we recolor denser \emph{unmatched} vertices.
Recall that inliers are such that $a_v \leq 8a_C$ and $e_v \leq 8e_C$ (\cref{def:inliers}). Note that the algorithm does not maintain those sets explicitly, as it can simply test for those inequalities. Importantly, inliers always have colors available in the clique palette by the Accounting Lemma (\cref{prop:accounting}). We therefore color them by computing the set of colors used by external neighbors and searching for an available color in $L(D)$ (line \ref{line:recol-denser-inlier}).

We color outliers like the sparser vertices (lines \ref{line:recolor-denser-outliers-start} to \ref{line:recolor-denser-outliers-end}): by sampling colors in $[\Delta+1]$. We allow outliers to steal the color of an \emph{unmatched inlier} and then recolor the inlier instead (line \ref{line:recolor-denser-call-inlier}).

\begin{algorithm}
    \caption{For a denser $v \in D$ (which needs recoloring) and is unmatched}
    \label{alg:recolor-denser}
    \Fn(\tcp*[f]{$\col[v] = \bot$ and $matched[v] = \bot$}){\RecolorDense{v}}{
\If(\tcp*[f]{color an outlier}){$v \notin I_D$ \label{line:recolor-denser-outliers-start}}{
            Sample $\chi \in [\Delta + 1]$ \label[line]{line:recolor-denser-outlier-sample}\;
            \lIf{$\chi \in M_D$}{go to line \ref{line:recolor-denser-outlier-sample} \label{line:recolor-denser-if-free-outlier-1}}
            \lForWithoutDo*{$u\in \Phi[\chi]$\label{line:recolor-denser-color-class}}{
                \lIf{$u \in N(v)$ but $u \notin I_D$}{
                    go to line \ref{line:recolor-denser-outlier-sample}
                }
                \label{line:recolor-denser-if-free-outlier-2}
            }
            \lIf{$\Phi_D[\chi] \cap I_D = \set{u}$}{
                \Color{u, $\bot$}, \Color{v, $\chi$},
                \RecolorDense{u}
                \label{line:recolor-denser-call-inlier}
            }\lElse*{\Color{v, $\chi$}}
            \label{line:recolor-denser-outliers-end}
        }
        \lElse*{
            \Color{v, $\chi$} with any $\chi \in L(D) \setminus \col[E(v)]$
            \tcp*[f]{$v\in I_D$}
            \label{line:recol-denser-inlier}
        }
    }
\end{algorithm}

To implement \RecolorDense efficiently, for each almost-clique $D$, we maintain
\begin{itemize}
\item the set of unused colors $L(D) = [\Delta + 1] \setminus \col[D]$, a.k.a.\ the clique palette,
    \item the set of vertices $\Phi_D[\chi]$ in $D$ with color $\chi$.
\end{itemize}
Those data-structures are easy to maintain with $O(\log n)$ worst-case update time because each update affects at most two vertices --- thus at most two almost-cliques --- and each set changes by at most one element.

\paragraph{Recoloring a Matched Anti-Edge.}
For a pair $u,v \in D$, \RecolorMatching{u,v} colors --- or recolors --- $u$ and $v$ the same. As we cannot ensure they share many available colors in the clique palette, we allow $u$ and $v$ to pick any color that is used by neither a matched vertex (line \ref{line:recolor-matching-test-M}) nor external neighbors (line \ref{line:recolor-matching-test-ext}). Since we only need few repeated colors, they always have $\Omega( \Delta )$ colors to choose from and we can find one using $O(\log n)$ random samples. However, when doing so, we might have to --- in fact, most likely will --- recolor an unmatched vertex (line \ref{line:recolor-matching-uncolor}).

\RecolorMatching is used whenever matched vertices need to get recolored (line \ref{line:insert-calls-recolor-matching} in \cref{alg:insertion}) or when the matching size needs to be increased (line \ref{line:maintain-recolor-matching} in \cref{alg:recolor-matching}). To increase the size of the matching, \AddAntiEdgeMatching needs to find a pairs $\set{u,v}$ of unmatched anti-neighbors to same-color. Since we maintain a matching of size $\Theta(a_D)$ while $D$ contains $\Theta(a_D\Delta)$ and anti-degrees are at most $O( \eps\Delta )$, a random anti-edge in $F_D$ has both endpoints unmatched with constant probability.

\begin{algorithm}
    \caption{Same-colors a pair $u,v\in D$}
    \label{alg:recolor-matching}
    \Fn{\RecolorMatching{u, v}}{
        \lIf{$matched[u] = matched[v] = \bot$}{
            $matched[u] \gets v$, $matched[v] \gets u$
        }
        Sample $\chi \in [\Delta + 1]$ \label{line:recolor-matching-sample}\;
        \lIf{$\chi \in M_D$ \label{line:recolor-matching-test-M}}{go to line \ref{line:recolor-matching-sample}}
        \lForWithoutDo*{$w \in \Phi[\chi]$ \label{line:recolor-matching-test-ext}}{
            \lIf{$w \in E(v) \cup E(u)$}{go to line \ref{line:recolor-matching-sample}}
        }
        $M_D \gets M_D + \chi$\;
        \If(\tcp*[f]{steal the color of an unmatched vertex}){$\Phi_D[\chi] = \set{w}$}{\Color{w, $\bot$}, \Color{u, $\chi$}, \Color{v, $\chi$} and \RecolorDense{w} \label{line:recolor-matching-uncolor}}
        \lElse{\Color{u, $\chi$} and \Color{v, $\chi$}}
    }

    \Fn{\AddAntiEdgeMatching{D}}{
        Sample $\set{u,v} \in F_D$ uniformly at random \label{line:resample-anti-edge}\;
        \lIf{$matched[u] = matched[v] = \bot$}{
            \RecolorMatching{u, v} \label{line:maintain-recolor-matching}
        }\lElse{go to line \ref{line:resample-anti-edge}}
    }
\end{algorithm}

\subsection{Dense Invariants}
\label{sec:invariants}

We begin by showing that our recoloring algorithm maintains the dense invariants (\cref{lem:recolor-denser-matching-invariants}). More interestingly, we show in \cref{lem:matching-inv} that when the Matching Invariant is broken by an update, one call to \AddAntiEdgeMatching suffices to fix it.

\begin{lemma}
    \label{lem:recolor-denser-matching-invariants}
    Suppose the invariants hold before a call to \RecolorDense{v} with $matched[v]=\bot$ or \RecolorMatching{u,v} or \RecolorSparse{v}. If the call ends, it maintains the invariants.
\end{lemma}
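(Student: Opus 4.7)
My plan is to argue by case analysis on the invoked procedure combined with structural induction on the depth of the call tree, since the three routines can cascade into each other through color stealing. For each almost-clique $D$, I will separately track (i) the multiplicities of its color classes $\Phi_D[\chi]$, which must stay at most $2$ for Dense Balance, and (ii) the size of the flagged-redundant set $M_D$, which must stay at least $\lfloor 8 a_D \rfloor$ for the Matching Invariant. A useful observation from \cref{lem:dyn-refined-part} is that the partition is frozen inside a phase and $a_D$ does not change over a single update, so only $|M_D|$ is in play for the second invariant.

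The clean base case is \RecolorDense{v} on an inlier $v \in D$: the chosen $\chi$ lies in $L(D)$, so no one in $D$ previously held $\chi$, the class $\Phi_D[\chi]$ ends up as $\set{v}$, and $M_D$ is untouched. For the outlier branch of \RecolorDense, the sampled $\chi$ satisfies $\chi \notin M_D$, hence at most one vertex of $D$ holds $\chi$ by the prior Dense Balance Invariant, and the inner loop over $\Phi[\chi]$ further rules out outlier neighbors of $v$. Either the unique holder is an inlier $u$, in which case the routine steals from $u$ and recurses, so $\Phi_D[\chi] = \set{v}$ upon return and the recursive invariants follow from the induction hypothesis; or the holder is an outlier non-neighbor of $v$ (or nobody holds $\chi$), in which case coloring $v$ leaves $|\Phi_D[\chi]| \leq 2$. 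In the latter situation a new redundant color might appear without being inserted into $M_D$, but this is fine: the invariant $|M_D| \geq \lfloor 8 a_D \rfloor$ is one-sided and $|M_D|$ has not shrunk.

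For \RecolorMatching{u,v}, the sampling again forces $\chi \notin M_D$ and the loop over $\Phi[\chi]$ rules out external-neighbor conflicts; by Dense Balance the set $\Phi_D[\chi]$ has at most one element, and the explicit test $\Phi_D[\chi] = \set{w}$ handles that case by stealing from $w$ and recursing via \RecolorDense. After the assignment, $\Phi_D[\chi] = \set{u,v}$, preserving Dense Balance, while adding $\chi$ to $M_D$ raises $|M_D|$ by one. Finally, for \RecolorSparse{v} the vertex $v$ lies outside every almost-clique, so assigning it a color cannot increase any $\Phi_D[\chi]$; the possible steal from a unique dense neighbor $w$ only shrinks one dense color class, trivially preserving Dense Balance. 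The routine then recurses on $w$ via \RecolorDense or \RecolorMatching depending on whether $w$ was matched, and the induction hypothesis takes over.

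The main obstacle is the Matching Invariant along stealing chains. When a matched $w$ is de-colored during a steal, its color $\col[w]$ ceases to be genuinely redundant, and I need to show the subsequent \RecolorMatching call restores a redundant color before the top-level call returns. The one-sidedness of the invariant helps: transient inconsistencies of $M_D$ (colors flagged as redundant that no longer are) do not violate $|M_D| \geq \lfloor 8 a_D \rfloor$, and each steal from a matched vertex is answered by a \RecolorMatching that inserts a freshly redundant color. Combined with the explicit removal of $\col[w]$ from $M_D$ that \RecolorInsert performs when a conflict destroys a matched pair, this accounting delivers both invariants at the end of the call.
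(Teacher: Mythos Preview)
Your proof follows the same case analysis as the paper, framed more explicitly as structural induction on call depth; the substance is identical. One point worth flagging: you deduce ``at most one vertex of $D$ holds $\chi$'' from the Dense Balance Invariant, but Dense Balance alone only gives $\leq 2$---the bound $\leq 1$ additionally requires that $M_D$ already contains every redundant color of $D$, which is a data-structure consistency property (used implicitly by the paper too) rather than one of the two named invariants, and your own observation that \RecolorDense on an outlier can create a redundant color \emph{outside} $M_D$ sits uneasily with that assumption for subsequent calls. The paper sidesteps this tension by asserting that the chosen color is ``either unused in $D$ or used by an unmatched inlier''; per the rejected set $M_D \cup \col[E(v)\cup O_D]$ described in the runtime analysis of \cref{lem:recolor-denser}, the intended algorithm rejects \emph{all} outlier-held colors, not just those of outlier neighbors, so your extra case does not arise under the intended reading.
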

\begin{proof}
    First, consider a call to \RecolorDense{v}. If $v \in I_D$, since $v$ is unmatched, recoloring $v$ does not affect the Matching Invariant; it maintains the Dense Balance Invariant because $v$ gets a color from $L(D)$, i.e., a color that is unused in $D$. If $v \notin I_D$, again since $v$ is unmatched it maintains the Matching Invariant. \RecolorDense{v} colors $v$ with a color which is either unused in $D$ or used by an unmatched inlier (lines \ref{line:recolor-denser-if-free-outlier-1}-\ref{line:recolor-denser-if-free-outlier-2}). If the color is free, the Dense Balance is maintained; if the color is used by an unmatched inlier $w$, we uncolor $w$ and call \RecolorDense{w} (line \ref{line:recolor-denser-call-inlier}). As we explained before, recoloring an unmatched inlier preserves invariants.

    Consider now a call to \RecolorMatching{u,v}. Observe that it can only increase the number of matched vertices, so it maintains the Matching Invariant. It colors $u$ and $v$ with some $\chi \notin M_D$. So $\chi$ is either free or used by a unique vertex in $D$. If a vertex $w$ uses $\chi$, the algorithm uncolors $w$ and calls \RecolorDense{w} (line \ref{line:recolor-denser-call-inlier}), which maintains the invariants as we previously explained.

    Finally, a call to \RecolorSparse{v} where $v\in S$ recolors $v$, which does not affect the invariants. It may steal a color to one denser neighbor $w$. If $w$ is unmatched \RecolorDense{w} maintains the invariant. If $w$ is matched, the algorithm calls \RecolorMatching{w, matched[w]} which can only end if $w$ and its matched vertex are same-colored. So the call to \RecolorSparse maintains the invariants.
\end{proof}

\cref{lem:recolor-denser-matching-invariants} shows that \RecolorInsert and \RecolorDelete maintain the Dense Balance Invariant since it is only affected when vertices are recolored. On the other hand, the Matching Invariant depends on the values of $|M_D|$ and $a_D$, which are affected by insertions or deletions of edges.
\begin{lemma}
    \label{lem:matching-inv}
    Consider a fixed update \RecolorInsert{u,v} or \RecolorDelete{u,v} before which the invariant holds. When \RecolorDelete{u,v} ends or when \RecolorInsert{u,v} reaches line \ref{line:begin-recolor-endpoint}, the invariants holds.
\end{lemma}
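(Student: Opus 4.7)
The plan is to handle the two invariants separately. Throughout, let $a_D^{\text{old}}$ and $|M_D^{\text{old}}|$ denote the values at the start of the update, and $a_D$, $|M_D|$ the values just before the check on line \ref{line:delete-if-matching-inv} or \ref{line:insert-if-matching-inv}.

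For the Dense Balance Invariant, the only color actions performed before line \ref{line:begin-recolor-endpoint} are uncoloring ($u$ on a color conflict in \RecolorInsert) and, possibly, a single call to \AddAntiEdgeMatching. Uncoloring never increases the multiplicity of any color in any $D$, so Dense Balance survives that step trivially. \AddAntiEdgeMatching invokes \RecolorMatching on two unmatched anti-neighbors in $D$, which by \cref{lem:recolor-denser-matching-invariants} preserves Dense Balance (the color chosen lies outside $M_D$, so its multiplicity ends at two, and any vertex it steals from is recolored by \RecolorDense, which picks a color from $L(D)$ or passes the steal to another inlier).

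The Matching Invariant reduces to the one-sided deficit bound
\[
   |M_D| \;\ge\; \lfloor 8\,a_D \rfloor - 1
\]
just before the check. In \RecolorDelete, no color is removed from $M_D$, so $|M_D| = |M_D^{\text{old}}| \ge \lfloor 8\,a_D^{\text{old}} \rfloor$; meanwhile the single deleted edge adds at most one anti-edge to $D$, raising $a_D$ by at most $2/|D|$. Since \cref{lem:dyn-refined-part} guarantees $|D| \ge (1-\eps)\Delta \gg 16$, the quantity $\lfloor 8\,a_D \rfloor$ increases by at most one, yielding the deficit bound. In \RecolorInsert, the new edge can only destroy anti-edges of $D$, so $\lfloor 8\,a_D \rfloor$ weakly decreases, while $|M_D|$ drops by at most one---exactly when line \ref{line:check-kill-anti-edge} unmatches $u$ from its partner $w$. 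Thus $|M_D| \ge |M_D^{\text{old}}| - 1 \ge \lfloor 8\,a_D^{\text{old}} \rfloor - 1 \ge \lfloor 8\,a_D \rfloor - 1$. In either case, the algorithm's conditional call to \AddAntiEdgeMatching{$D$} adds exactly one new color to $M_D$ via \RecolorMatching, restoring $|M_D| \ge \lfloor 8\,a_D \rfloor$.

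The principal subtlety is to verify that a \emph{single} call to \AddAntiEdgeMatching is enough; this relies both on the deficit bound above and on the existence of an unmatched anti-edge in $D$, which holds because the matching covers $O(a_D)$ vertices while $D$ has $\Theta(a_D\,|D|)$ anti-edges. A minor point worth flagging is that \AddAntiEdgeMatching executes \RecolorMatching while the Matching Invariant is momentarily off by one; this causes no trouble because, inspecting the proof of \cref{lem:recolor-denser-matching-invariants}, the preservation of Dense Balance by \RecolorMatching depends only on Dense Balance holding beforehand, not on the Matching Invariant itself.
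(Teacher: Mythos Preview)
Your proof is correct and follows essentially the same approach as the paper: establish that the deficit $\lfloor 8a_D\rfloor - |M_D|$ is at most one (via the same case analysis on insert/delete), then observe that a single call to \AddAntiEdgeMatching closes the gap while \cref{lem:recolor-denser-matching-invariants} handles Dense Balance. You are slightly more explicit than the paper in flagging the temporary Matching-Invariant violation during \AddAntiEdgeMatching and the existence of an unmatched anti-edge (the paper defers the latter to \cref{lem:increase-matching-time}), but the argument is the same.
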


\begin{proof}
    Assume that $|M_D| < \floor{ 8 \cdot a_D }$ when the algorithm reaches line \ref{line:delete-if-matching-inv} of \RecolorDelete or line \ref{line:insert-if-matching-inv} of \RecolorInsert. It must be that $|M_D| = \floor{ 8 \cdot a_D} - 1$ because the invariant held before the update and
    \begin{itemize}
        \item inserting an edge cannot increase $a_D$ and may decrease $M_D$ by one if the edge inserted was between matched vertices (line \ref{line:check-kill-anti-edge});
        \item deleting an edge may increase $a_D$ by $2/|D|$, which means $\floor{ 8 a_D }$ increases by at most one, while $M_D$ does not change.
    \end{itemize}
    In either case \AddAntiEdgeMatching is called before \RecolorDelete ends or \RecolorInsert reaches line \ref{line:begin-recolor-endpoint}. In order to end, \AddAntiEdgeMatching matches a new pair $\set{u,v}$ of previously unmatched vertices and calls \RecolorMatching{u,v} (line \ref{line:maintain-recolor-matching} in \cref{alg:recolor-matching}). Then \RecolorMatching{u,v} ends only after same-coloring $u$ and $v$ (line \ref{line:recolor-matching-uncolor} in \cref{alg:recolor-matching}), so the updated matching has size $\floor{ 8 \cdot a_D }$. The Dense Balance Invariant holds when the algorithms call \AddAntiEdgeMatching (because it held before the update and  uncoloring vertices does not affect it) and calls to \RecolorMatching and \RecolorDense preserve it (\cref{lem:recolor-denser-matching-invariants}).
\end{proof}

\subsection{Complexity of Recoloring the Denser Vertices}
\label{sec:denser}

We investigate the time complexity of our recoloring procedures for dense vertices and express it
in terms of the maximal size of a color class at the time of the update and the sparsity parameter $\zeta$. Observe that the probabilistic statements of \cref{lem:recolor-denser,lem:recolor-matching,lem:increase-matching-time} are only over the randomness sampled by the algorithm during the corresponding calls to \RecolorDense, \RecolorMatching and \AddAntiEdgeMatching.

\begin{lemma}
    \label{lem:recolor-denser}
    Suppose that the invariants hold before a call to $\RecolorDense{v}$ where $v\in D$, $\col[v] = \bot$ and $matched[v] = \bot$.
W.h.p. over the random colors it samples, the call ends in
    \[ O\paren*{ \max_{\chi} |\Phi[\chi]| \log^2 n + \zeta \log n } \]
    time.
\end{lemma}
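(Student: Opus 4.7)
The plan is to handle the inlier and outlier branches of \RecolorDense separately: the inlier branch is a deterministic search that always succeeds thanks to the invariants, while the outlier branch is a Las Vegas loop whose per-trial success probability is bounded below by a constant. The color-stealing step at line~\ref{line:recolor-denser-call-inlier} recurses at most once, and always on an unmatched inlier (since any $\chi \notin M_D$ is not a repeated color in $D$), so the recursion does not compound.

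\textbf{Inlier branch.} For $v \in I_D$, I would first argue that the set $L(D) \setminus \col[E(v)] = L(D) \cap L(v)$ used on line~\ref{line:recol-denser-inlier} is non-empty. By the Accounting Lemma (\cref{prop:accounting}), $|L(D) \cap L(v)| \geq \Delta - \deg(v) + M - a_v + 1$, where the final $+1$ comes from the uncolored vertex $v$. The Matching Invariant gives $M \geq \lfloor 8 a_D \rfloor$ and the inlier condition gives $a_v \leq 8 a_D$; since $a_v$ is an integer we get $a_v \leq \lfloor 8 a_D \rfloor$ and hence $M - a_v \geq 0$, so the bound is at least $1$. The cost is dominated by iterating $E(v)$ to build the forbidden set $\col[E(v)]$ and scanning $L(D)$ until a color outside it is found: with balanced-tree set operations this is $O(e_v \log n) = O(\zeta \log n)$, because $e_v \leq 8 e_D = O(\zeta/\eps^2) = O(\zeta)$ for constant $\eps$.

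\textbf{Outlier branch.} For $v \notin I_D$, one trial costs $O(|\Phi[\chi]| \log n + \log n)$: the test $\chi \in M_D$ is $O(\log n)$, and each $u \in \Phi[\chi]$ is checked against $N(v)$ and against the inlier predicate ($a_u \leq 8 a_D$, $e_u \leq 8 e_D$) in $O(\log n)$ time. For the per-trial success probability, a sample $\chi$ is rejected only when $\chi \in M_D$ or when some $u \in N(v) \setminus I_D$ has color $\chi$. The Matching Invariant together with $a_D = O(\zeta/\eps^2)$ gives $|M_D| = O(\zeta)$. For the neighbor side, $|N(v) \setminus I_D| \leq |E(v)| + |O_D| \leq 2\eps\Delta + (1+\eps)\Delta/4$, combining the phase-stability bound $|E(v)| \leq 2 \eps \Delta$ from \cref{lem:dyn-refined-part} with $|O_D| \leq |D|/4$ obtained by applying Markov's inequality to each of the two conditions in \cref{def:inliers} and union-bounding. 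Hence the rejection probability is at most $2\eps + (1+\eps)/4 + O(\zeta/\Delta)$, which for $\eps = 1/110$ and $\Delta = \Omega(\zeta) = \Omega(n^{2/3})$ is bounded away from $1$, so a standard geometric-tail argument gives $O(\log n)$ trials with high probability. Summing over trials gives $O(\max_\chi |\Phi[\chi]| \log^2 n)$ for the sampling loop, and the at-most-one recursive inlier call contributes an additional $O(\zeta \log n)$, matching the claimed bound. I expect the main obstacle to be the constant calibration: bounding $2\eps + (1+\eps)/4$ strictly below $1$ requires coordinating the choice of $\eps$ with the decomposition bound $a_D + e_D = O(\zeta/\eps^2)$ and with the Matching Invariant, which is exactly where the specific value $\eps = 1/110$ is used.
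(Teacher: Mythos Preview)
Your proof is correct and follows essentially the same route as the paper's: deterministic search via the Accounting Lemma for inliers, and geometric random trials with constant success probability for outliers, followed by at most one recursive inlier call. One small slip: you write that ``the Matching Invariant together with $a_D = O(\zeta/\eps^2)$ gives $|M_D| = O(\zeta)$'', but the Matching Invariant is a \emph{lower} bound $|M_D| \geq \lfloor 8 a_D \rfloor$, not an upper bound. The upper bound actually comes from the algorithm's control flow: \AddAntiEdgeMatching is only invoked when $|M_D| < \lfloor 8 a_D \rfloor$, and \RecolorMatching on an already-matched pair does not increase $|M_D|$, so $|M_D|$ never exceeds $\lfloor 8 a_D \rfloor$ by more than one. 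The paper avoids even this by using the cruder bound $|M_D| \leq 8 a_D \leq 24\eps\Delta$ (from $a_D \leq 3\eps\Delta$, \cref{lem:dyn-refined-part}), folding it directly into the $(1/4 + 27\eps)\Delta \leq \Delta/2$ estimate rather than treating it as an $O(\zeta/\Delta)$ term.
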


\begin{proof}
    Suppose first that $v$ is an inlier, i.e., such that $a_v \leq 8 a_D$ and $e_v \leq 8 e_D$. We claim that every inlier has an available color in $L(D) \setminus \col[E(v)]$. Observe that this set is exactly $L(D) \cap L(v)$. Under the Matching Invariant, at least $|M_D| \geq \floor{ 8 a_D }$ colors are repeated in $D$ and since $v$ is uncolored, the accounting lemma (\cref{prop:accounting}) implies that
    \[
    |L(D) \cap L(v)|
    \geq |M_D| - a_v + 1
    \geq |M_D| - \floor{ 8 a_D } +1 \geq 1 \ ,
    \]
    where the second inequality uses the assumption that $a_v \leq 8 a_D$ and $a_v$ is an integer. To find this color, the algorithm loops over vertices of $E(v)$ and constructs the set $\col[E(v)]$ in $O(e_v \log n) \leq O(\zeta \log n)$ time. Then it loops over $L(D)$ and stops when it finds a color that is not in $\col[E(v)]$. If $L(D)$ contains more than $8e_D \geq e_v$ colors, the algorithm finds a color after $O(e_D \log n) \leq O( \zeta \log n)$ time. Otherwise, the algorithm goes over at most $8e_D \leq O(\zeta)$ colors; hence ends in $O(\zeta \log n)$ time.

    Suppose now that $v$ is an outlier. The algorithm samples colors in $[\Delta+1]$ until it finds one that is not used by (1) a matched vertex, (2) an external neighbor, or (3) an outlier. That is a color not in $M_D \cup \col[ E(v) \cup O_D ]$ (lines \ref{line:recolor-denser-if-free-outlier-1}-\ref{line:recolor-denser-if-free-outlier-2}). We claim that this set contains at most $\Delta/2$ colors.
    At most $2\eps\Delta$ colors are used by an external neighbors of $v$. By Markov's inequality, the number of vertices with $a_v > 8a_D$ or $e_v > 8e_D$ is at most $|C|/4 \leq (1/4 + \eps)\Delta$. Finally, observe that we always have $|M_D| \leq 8 \cdot 3\eps\Delta$. Indeed, \RecolorInsert and \RecolorDelete call \AddAntiEdgeMatching only when $|M_D| < 8 a_D \leq 24\eps\Delta$ because $a_D \leq 3\eps\Delta$. Calls to \RecolorMatching{u,v} where $u$ and $v$ are already matched do not increase the matching size.
    Adding these up, at most $(1/4 + 27\eps)\Delta \leq \Delta/2$ colors are blocked by $M_D \cup \col( E(v) \cup O_D )$ for $\eps \leq 1/108$.
    A random color in $[\Delta+1]$ therefore works \wp $1/2$, so the probability that the algorithm tries more than $c\log n$ random colors is $n^{-c}$. To verify if a random color works, the algorithm iterates through its color class in $O(\max_\chi |\Phi[\chi]|\log n)$ time. Then, there might be a additional call to \RecolorDense{u} where $u \in I_D$, which results in the claimed runtime.
\end{proof}

\RecolorMatching colors $u$ and $v$ the same way \RecolorDense colors an outlier. The complexity of recoloring an unmatched denser vertex (line \ref{line:recolor-matching-uncolor}) is upper bounded by \cref{lem:recolor-denser}. So we obtain the following lemma:
\begin{lemma}
    \label{lem:recolor-matching}
    Suppose all invariants hold before a call to \RecolorMatching{u,v}. W.h.p., the call ends after
    $O\paren*{ \max_{\chi} |\Phi[\chi]| \log^2 n + \zeta \log n }$
    time.
\end{lemma}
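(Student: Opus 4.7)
The plan is to bound the per-iteration cost of the sampling loop in \RecolorMatching, argue the loop terminates in $O(\log n)$ iterations with high probability, and then charge the (at most one) follow-up call to \RecolorDense using \cref{lem:recolor-denser}. Structurally, the argument closely mirrors the analysis of the outlier case in the proof of \cref{lem:recolor-denser}, since \RecolorMatching samples colors in essentially the same way.

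First, I would analyze one iteration of the loop starting at line~\ref{line:recolor-matching-sample}. Sampling $\chi$ and checking $\chi \in M_D$ take $O(\log n)$ time. Then the algorithm iterates over $\Phi[\chi]$ and for each element performs an $O(\log n)$-time membership test against $E(u) \cup E(v)$. The total cost of one iteration is therefore $O(\max_{\chi} |\Phi[\chi]| \log n)$.

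Next, I would bound the probability that a sampled color is rejected. A color $\chi$ is rejected only if it lies in $M_D$ or is used by some vertex of $E(u) \cup E(v)$. By \cref{lem:dyn-refined-part}, throughout a phase every denser vertex satisfies $e_v \leq 2\eps\Delta$, so $|\col[E(u) \cup E(v)]| \leq 4\eps\Delta$. For the matching, note that \AddAntiEdgeMatching is invoked by \Insert/\Delete only when $|M_D| < 8 a_D$ and each call grows $M_D$ by at most one; combined with $a_D \leq 3\eps\Delta$ (again by \cref{lem:dyn-refined-part}), this gives $|M_D| \leq 24\eps\Delta + 1$. So the number of blocked colors is at most $(24 + 4)\eps\Delta + 1 \leq \Delta/2$ for $\eps \leq 1/110$, i.e., a uniformly random color in $[\Delta+1]$ is valid with probability at least $1/2$. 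A standard geometric-tail bound then shows that more than $c\log n$ iterations occur with probability at most $n^{-c}$, so the loop finishes in $O(\max_{\chi} |\Phi[\chi]| \log^2 n)$ time w.h.p.

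Finally, if the chosen color $\chi$ satisfies $\Phi_D[\chi] = \{w\}$ for some inlier $w$ (the only possibility when $\chi \notin M_D$ under the Dense Balance Invariant, using \cref{lem:recolor-denser-matching-invariants} to see the invariant still holds at that point), then the call ends with \Color{w,$\bot$}, \Color{u,$\chi$}, \Color{v,$\chi$} and a single recursive call \RecolorDense{w}. Since $w$ is unmatched (it just lost its color, and $\chi \notin M_D$ guarantees $w$ was not half of a same-color pair), \cref{lem:recolor-denser} applies and adds $O(\max_{\chi} |\Phi[\chi]| \log^2 n + \zeta \log n)$ w.h.p. Summing the two contributions yields the claimed runtime. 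The one subtlety to handle carefully is justifying the bound $|M_D| \leq 24\eps\Delta + 1$ at the moment of the call — which amounts to observing that the invariant maintenance in \cref{lem:matching-inv} prevents $|M_D|$ from ever overshooting $\lfloor 8 a_D \rfloor$ by more than one; apart from that, the lemma is essentially a direct corollary of the techniques already used in \cref{lem:recolor-denser}.
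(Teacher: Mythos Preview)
Your proposal is correct and mirrors the paper's own argument, which simply observes that \RecolorMatching samples colors exactly like the outlier branch of \RecolorDense and then invokes \cref{lem:recolor-denser} for the single follow-up call. One small slip: the vertex $w$ with $\Phi_D[\chi]=\{w\}$ need not be an \emph{inlier} --- the loop at line~\ref{line:recolor-matching-test-ext} only filters external neighbors, not outliers of $D$ --- but this is harmless since \cref{lem:recolor-denser} applies to any unmatched denser vertex.
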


It remains to bound the time spent on fixing the matching size with \AddAntiEdgeMatching.
\begin{lemma}
    \label{lem:increase-matching-time}
    Consider a call to \AddAntiEdgeMatching{D} before which the Dense Balance Invariants holds but not the Matching Invariant. W.h.p., it ends in
    $O\paren*{ \max_{\chi} |\Phi[\chi]| \log^2 n + \zeta \log n }$
    time and $|M_D|$ has increased by one.
\end{lemma}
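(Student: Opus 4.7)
The plan is to first show that a uniformly random anti-edge in $F_D$ has both endpoints unmatched with probability bounded below by a positive constant, so that the resampling loop at line \ref{line:resample-anti-edge} terminates after $O(\log n)$ iterations \whp. Then I would invoke \cref{lem:recolor-matching} to charge the cost of the subsequent call to \RecolorMatching, and argue that $|M_D|$ strictly increases.

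First I would pin down the size of $M_D$ at the moment \AddAntiEdgeMatching is entered. By the proof of \cref{lem:matching-inv}, a call to \AddAntiEdgeMatching while the Matching Invariant is broken occurs only with $|M_D| = \floor{8 a_D} - 1$, so in particular $a_D \geq 1/8$ and $|M_D| < 8 a_D$. Since $|D| \geq (1-\eps)\Delta$, the number of anti-edges is $|F_D| = a_D|D|/2 \geq a_D (1-\eps)\Delta/2$.

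Second I would count the anti-edges incident to at least one matched vertex. There are exactly $2|M_D|$ matched vertices in $D$, and by \cref{lem:dyn-refined-part} each has at most $3\eps\Delta$ anti-neighbors, so the count is at most $2|M_D| \cdot 3\eps\Delta < 48 \eps a_D \Delta$. Dividing by $|F_D|$, a uniformly sampled anti-edge has both endpoints unmatched with probability at least $1 - 96\eps/(1-\eps)$, which is a constant bounded away from zero for $\eps = 1/110$. The resampling loop therefore succeeds within $O(\log n)$ iterations except with probability $n^{-\Omega(1)}$; each iteration uniformly samples from the maintained set $F_D$ in $O(\log n)$ time, contributing $O(\log^2 n)$ overall. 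Once a good anti-edge $\{u,v\}$ is found, \cref{lem:recolor-matching} bounds the cost of the ensuing \RecolorMatching{u,v} by $O(\max_\chi |\Phi[\chi]| \log^2 n + \zeta \log n)$, which dominates. For the second conclusion, since $u$ and $v$ are unmatched on entry and \RecolorMatching only terminates after same-coloring them with a color $\chi \notin M_D$ (the guard at line \ref{line:recolor-matching-test-M}), the update $M_D \gets M_D + \chi$ adds a genuinely new color and increments $|M_D|$ by exactly one.

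The main (and essentially only) obstacle is the quantitative one in the second step: verifying that after combining the invariants with the worst-case bounds $a_v \leq 3\eps\Delta$ and $|M_D| < 8 a_D$, the success probability per sample is still a constant. Everything else follows from previously established lemmas, so no new probabilistic or combinatorial ideas are required.
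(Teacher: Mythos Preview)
Your proposal is correct and follows essentially the same argument as the paper: bound the number of anti-edges touching a matched vertex by $2|M_D|\cdot 3\eps\Delta < 48\eps a_D\Delta$ using $|M_D|<8a_D$ and the anti-degree bound from \cref{lem:dyn-refined-part}, compare with $|F_D|=a_D|D|/2$ to get a constant success probability per sample, conclude $O(\log n)$ samples suffice \whp, and invoke \cref{lem:recolor-matching}. The only cosmetic difference is that you route through the proof of \cref{lem:matching-inv} to obtain $|M_D|=\floor{8a_D}-1$, whereas the paper simply uses the hypothesis $|M_D|<\floor{8a_D}$ directly (which already forces $a_D\geq 1/8$).
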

\begin{proof}
We may assume that $a_D > 1/8$ since otherwise the Matching Invariant holds even with an empty matching. In particular, $F_D$ is not empty.
If $|M_D| < \floor*{ 8 \cdot a_D }$, at least $|F_D|/2$ anti-edges of $F_D$ have both endpoints unmatched because $F_D$ contains $a_D |D|/2$ anti-edges while the current colorful matching is incident to at most $|M_D| \cdot 6\eps\Delta < 48\eps \cdot a_D|D| < |F_D|/2$ of them for, say, $\eps \leq 1/100$. The probability that \AddAntiEdgeMatching samples $c\log n$ anti-edges in $F_D$ without finding any with both endpoints available is thus $n^{-c}$. \cref{lem:recolor-matching} then implies the result.
\end{proof}

\subsection{Recoloring the Sparser Vertices}
\label{sec:sparser}

\FreshColoring ensures vertices of $S$ have sufficiently many colors in their palette at the beginning of the phase to maintain large palettes over entire phases. Importantly, vertices in $S$ conflict only with neighbors in $S$.
Recall that $\LS(v) = [\Delta + 1] \setminus \col[\NS(v)]$ is the palette of colors $v\in S$ can adopt with respect to $\col$.

\begin{lemma}
    \label{cor:recolor-sparse-time}
    Suppose \FreshColoring succeeded.
    A call to \RecolorSparse{v} where $\col[v]=\bot$ and $v\in S$ ends in
    \[
    O\paren*{
        \Delta / \zeta \cdot \max_{\chi} |\Phi[\chi]| \log^2 n +
        \zeta\log n
    }
    \] time with high probability over the random colors it samples.
\end{lemma}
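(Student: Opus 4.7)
The plan is to bound separately the number of random color trials performed by \RecolorSparse{v} and the per-trial cost. For the trials, I will show that a uniformly random color in $[\Delta+1]$ is \emph{accepting} --- meaning it avoids $\col[\NS(v)]$ and collides with at most one vertex of $\ND(v)$ --- with probability $\Omega(\zeta/\Delta)$. A Chernoff argument then gives that $O((\Delta/\zeta)\log n)$ trials suffice with high probability. Each trial iterates $\Phi[\chi]$ and tests membership in $\NS(v)\cup\ND(v)$ in $O(\max_\chi|\Phi[\chi]|\log n)$ time, producing the main runtime term $O((\Delta/\zeta)\cdot\max_\chi|\Phi[\chi]|\log^2 n)$. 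If the accepting color collides with a unique dense neighbor, one call to \RecolorDense or \RecolorMatching adds the remaining $O(\max_\chi|\Phi[\chi]|\log^2 n + \zeta\log n)$ by \cref{lem:recolor-denser,lem:recolor-matching}.

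The core combinatorial step is lower-bounding the number of accepting colors by $\Omega(\zeta)$. Let $B=\{\chi : |\Phi[\chi]\cap\ND(v)|\geq 2\}$ be the colors ruled out by doubled dense neighbors; pigeonhole gives $|B|\leq|\ND(v)|/2$. The plan is to combine two lower bounds on $|\LS(v)\setminus B|$: a \emph{degree-based} bound $|\LS(v)\setminus B|\geq 1+|\ND(v)|/2$, obtained from $|\LS(v)|\geq \Delta+1-|\NS(v)|\geq 1+|\ND(v)|$, and a \emph{slack-based} bound $|\LS(v)\setminus B|\geq \gamma\zeta-|\ND(v)|/2$ obtained from the Sparse Slack guaranteed by \cref{prop:fresh-coloring}. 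Their maximum is minimized near $|\ND(v)|\approx\gamma\zeta$ and is at least $\gamma\zeta/2$ for $\zeta$ sufficiently large, giving the required acceptance probability.

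The step I expect to be the main obstacle is transporting the Sparse Slack across the phase. \FreshColoring ensures $|\LS(v)|\geq 3\gamma\zeta$ at the start, and one must argue that each of the $t=\gamma\zeta$ updates can decrease $|\LS(v)|$ by only a bounded additive constant: the adversary's insertion or deletion can change the contribution of at most one sparse neighbor of $v$, and the induced cascade of recolorings (sparse $\to$ dense outlier or matched $\to$ dense inlier) performs at most one sparse recoloring since \RecolorSparse is the only routine that modifies the color of a sparse vertex and it is invoked at most once per update. This keeps $|\LS(v)|\geq\gamma\zeta$ throughout the phase. Once this invariant is in place, the remaining arguments --- pigeonhole on $B$, the max-of-two-bounds lower bound, the Chernoff bound on the trial count, and the invocations of \cref{lem:recolor-denser,lem:recolor-matching} for the recursive call --- are mechanical.
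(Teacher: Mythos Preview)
Your proposal is correct and follows essentially the same structure as the paper's proof: maintain $|\LS(v)|\geq\Omega(\zeta)$ across the phase using that at most one sparse vertex is recolored per update, lower-bound the number of accepting colors by $\Omega(\zeta)$, deduce $O((\Delta/\zeta)\log n)$ trials \whp, charge $O(\max_\chi|\Phi[\chi]|\log n)$ per trial, and add one call to \cref{lem:recolor-denser} or \cref{lem:recolor-matching} for the possible steal.

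The only notable difference is in the combinatorial lower bound on accepting colors. You pigeonhole $|B|\leq|\ND(v)|/2$ and take the maximum of a degree-based bound $|\LS(v)\setminus B|\geq 1+|\ND(v)|/2$ and a slack-based bound $|\LS(v)\setminus B|\geq\gamma\zeta-|\ND(v)|/2$, splitting on $|\ND(v)|$. The paper instead observes directly that $L(v)\subseteq\LS(v)\setminus R$ and $|L(v)|\geq|R|$ (each doubly-used color frees a palette slot), giving $|\LS(v)\setminus R|\geq|R|$; it then splits on whether $|R|\leq t$, in which case $|\LS(v)\setminus R|\geq|\LS(v)|-t\geq t$. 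Both arguments express the same repeated-color slack phenomenon and yield the same $\Omega(\zeta)$ bound; the paper's formulation is marginally slicker in that it avoids referencing $|\ND(v)|$ separately, while yours makes the two competing contributions (degree vs.\ slack) more explicit.
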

\begin{proof}
    Call $R$ the set of colors used by at least two denser neighbors of $v$. The algorithm samples colors in $[\Delta+1]$ until it finds one in $\LS(v) \setminus R$ and we claim that this set contains at least $t$ colors at all time. It is easy to verify that $|R| \leq |L(v)|$, where $L(v) := [\Delta+1]\setminus\col(N(v))$ is the set of available colors, hence that $|\LS(v) \setminus R| \geq |R|$ because $L(v) \subseteq \LS(v) \setminus R$. So we may assume that $|R| \leq t$. Now, since \FreshColoring succeeded (\cref{prop:fresh-coloring}), we have $|\LS(v)| \geq 3t$ when the phase begins. As each update recolors at most one sparser vertex, after $i \leq t$ updates, we have that $|\LS(v) \setminus R| \geq 3t - i - |R| \geq t$.

    Each random color trial (line \ref{line:recol-sparse-sample} in \cref{alg:recolor-sparse}) recolors $v$ with probability at least \[ \frac{|\LS(v) \setminus R|}{\Delta+1} \geq \frac{ t }{\Delta+1} = \frac{ \gamma \cdot \zeta }{\Delta+1} \ . \] Hence, \whp, the algorithm samples $O(\Delta \log n / \zeta)$ colors before finding one that fits. To verify if a color $\chi$ is suitable for recoloring $v$, the algorithm iterates through vertices of $\Phi[\chi]$ and checks if they belong to $\NS(v)$ or $\ND(v)$ in $O(\log n)$ time. Overall, finding a suitable $\chi$ takes $O\paren*{ \Delta / \zeta \cdot \max_{\chi} |\Phi[\chi]| \log^2n }$ time. Then, if there is a $w \in \ND(v)$ with color $\chi$, \RecolorSparse makes one call to \RecolorDense{w} or \RecolorMatching{w, matched[w]} which recolors $w$ in $O(\zeta \log n + \max_\chi |\Phi[\chi]|\log^2n)$ time by \cref{lem:recolor-denser,lem:recolor-matching}.
\end{proof}

Overall, our algorithm is efficient only if we can maintain small enough color classes. The Dense Balance property ensures that denser vertices contribute $O(n/\Delta)$ to each color class. The contribution of vertices in $S$ is accounted for separately, using an argument over the randomness of the entire phase:
\begin{lemma}
    \label{lemma:dyn-sparse-balance}
    Suppose that \FreshColoring succeeded. With high probability \emph{over the randomness of the whole phase}, for every $\chi \in[\Delta+1]$, its color class contains
    \begin{equation}
        \label{eq:sparse-balance}
    | \Phi[\chi] \cap S |
    \leq O\paren*{ \frac{n}{\zeta} + \frac{t}{\zeta} + \log n }
    \end{equation}
    sparser vertices before every update of the phase.
\end{lemma}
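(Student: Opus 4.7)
\medskip\noindent\textbf{Proof Sketch.}
The plan is to decompose $|\Phi[\chi]\cap S|$ at any moment during the phase as an initial part (at the start of the phase) plus the number of sparse recolorings that produced color $\chi$ since then; removals can only decrease the left-hand side, so this is a valid uniform upper bound over the whole phase. The initial part is $O(n/\zeta + \log n)$ by the Sparse Balance property of \FreshColoring, so the task reduces to bounding the number of new $\chi$-colorings of sparser vertices by $O(t/\zeta + \log n)$ with high probability, and then union bounding over the $\Delta + 1 \le n$ colors.

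First, inspecting the pseudocode, \RecolorSparse is invoked at most once per update: only from line~\ref{line:begin-recolor-endpoint} of \RecolorInsert, and only when the uncolored endpoint lies in $S$. All other recoloring cascades -- those triggered inside \AddAntiEdgeMatching, \RecolorMatching, \RecolorDense -- act only on denser vertices. Hence at most $t$ sparse recolorings occur during a phase. Second, reusing the argument in \cref{cor:recolor-sparse-time}, a call to \RecolorSparse{v} returns a color drawn uniformly at random from $\LS(v)\setminus R$, where $R$ is the set of colors used by two or more denser neighbors at the moment of the call, and $|\LS(v)\setminus R|\ge t$ at every point of the phase (using Sparse Slack from \FreshColoring, the fact that at most $i\le t$ sparser vertices have been recolored so far, and the vacuous case $|R|>t$). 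Consequently, for any fixed $\chi$ and conditionally on the entire past,
\[
\Pr\bigl[\col[v_i] = \chi \,\bigm|\, \mathcal{F}_{i-1}\bigr] \le \frac{1}{t} = \frac{1}{\gamma \zeta},
\]
where $v_i$ denotes the vertex recolored at the $i$-th sparse recoloring and $\mathcal{F}_{i-1}$ is the history up to that point.

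Finally, let $Y_i$ be the indicator that the $i$-th sparse recoloring produces color $\chi$ and set $Y = \sum_i Y_i$. The conditional probability bound above makes $Y$ stochastically dominated by a binomial random variable with $t$ trials and success probability $1/t$, so a standard multiplicative Chernoff bound yields $\Pr[Y > C(t/\zeta+\log n)] \le n^{-c}$ for suitable constants. I expect the main delicacy to be ensuring that this Chernoff argument survives the \emph{adaptive} adversary: the identity of $v_i$, the set $\LS(v_i)\setminus R$, and even whether there is an $i$-th recoloring at all depend on the algorithm's prior random choices. The point that makes the argument go through is that the bound $|\LS(v_i)\setminus R|\ge t$ is a \emph{deterministic} consequence of the phase invariants and of the inequality $i\le t$, so the conditional success probability $1/t$ holds for every realization of the adversary's history -- exactly the hypothesis needed to invoke Chernoff for adaptive Bernoulli sequences. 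A union bound over the $\le n$ colors then gives the claim.
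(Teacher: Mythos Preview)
Your proposal is correct and follows essentially the same route as the paper: bound the initial contribution by the Sparse Balance guarantee of \FreshColoring, bound the conditional probability that any single call to \RecolorSparse outputs a fixed color $\chi$ by $1/t$ via the palette-size lower bound from the proof of \cref{cor:recolor-sparse-time}, and apply Chernoff with stochastic domination followed by a union bound over colors. Your observation that removals can only help (so it suffices to bound the total number of $\chi$-recolorings over the whole phase) is a mild streamlining of the paper's argument, which instead union-bounds over all time steps $i\in[t]$; and your explicit discussion of why the adaptive adversary does not spoil the Chernoff hypothesis is a point the paper leaves implicit.
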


\begin{proof}
    Fix $i \leq t$. We show that \cref{eq:sparse-balance} holds before the $i$-th update of this phase with high probability. Fix $\chi \in [\Delta + 1]$ and count the number of sparser vertices to join $\Phi[\chi]$ during this phase. Define $X_j$ to indicate if, during the $j$-th update, a sparser vertex joins $\Phi[\chi]$. Observe that, by the success of \FreshColoring, \RecolorSparse colors $v$ with a uniform color from a set of at least $\gamma\cdot\zeta$ colors, regardless of the recolorings that occurred earlier in the phase. Hence, the probability that it recolors $v$ with $\chi$ is $\Exp*{ X_j \given X_1, \ldots, X_{j-1} }
    \leq O(1/\zeta)
    $.
    By the Chernoff bound with stochastic domination (\cref{lem:chernoff-dom}), we have $\sum_{j \in[i]} X_j \leq O(i/\zeta) + O(\log n)$ with high probability. So before the $i$-th update, the set $\Phi[\chi]$ contains the $O\paren{ n /\zeta }$ vertices initially present after \FreshColoring (Sparse Balance property in \cref{prop:fresh-coloring}) and at most $O(i/\zeta + \log n)$ new ones. This holds for all $i \in [t]$ by union bound and thus implies the lemma.
\end{proof}

\begin{remark}
    \cref{eq:sparse-balance} holds with high probability over all the randomness used since the phase began, unlike \cref{lem:recolor-denser,lem:recolor-matching,cor:recolor-sparse-time}, which only consider the current update. Conditioning on \cref{eq:sparse-balance} would introduce a bias in \cref{lem:recolor-denser,lem:recolor-matching,cor:recolor-sparse-time}, which explains why we instead express runtimes in \cref{lem:recolor-denser,lem:recolor-matching,cor:recolor-sparse-time} as functions of the color class sizes. Since all events occur with high probability, they hold simultaneously by the union bound.
\end{remark}

\subsection{Proof of \cref{thm:main}}
\label{sec:proof-theorem}

Let $\delta$ be the universal constant from \cref{prop:refined-part} and define $\eps = 1/110$ and $\zeta = n^{2/3}$.
When $\Delta < n^{2/3}/(\eps^2 \delta)$, we use the naive algorithm that scans the neighborhood of the vertices after each update. When $\Delta \geq n^{2/3}/(\eps^2\delta)$, we use \cref{alg:insertion,alg:del}. We consider a fixed phase and show that the amortized update time is $\Ot{ n^{2/3} }$ with high probability. Recall that a phase is a sequence of $t = \gamma \cdot \zeta$ updates, where $\gamma$ is the universal constant described in \cref{prop:fresh-coloring}. Moreover, one should bear in mind that the algorithm of \cref{prop:refined-part} maintains an $(\eps,\zeta)$-sparser-denser decomposition with $O(\log n)$ update time such that the vertex partition is fixed throughout the phase.

\emph{Correctness \& Invariants.} \FreshColoring gives a total and proper $(\Delta+1)$-coloring and neither \RecolorInsert or \RecolorDelete introduce color conflicts. By induction, the invariants hold \whp before every update of the phase. Indeed, by \cref{prop:fresh-coloring}, if \FreshColoring ends, \whp, the invariants hold before the first update of the phase.
By \cref{lem:recolor-denser-matching-invariants,lem:matching-inv}, invariants are maintained by \RecolorDense, \RecolorMatching and \RecolorSparse, and thus are preserved after each update.

\emph{Efficiency.} By \cref{prop:fresh-coloring}, computing the fresh coloring takes $\Ot{ n + n^2 / \zeta }$ time. With high probability, \FreshColoring succeeds and henceforth we condition on its success. By \cref{cor:recolor-sparse-time,lem:recolor-denser,lem:recolor-matching}, \whp, each update ends in $\Ot*{ \frac{\Delta}{\zeta} \max_\chi |\Phi[\chi]| + \zeta }$ time. On the other hand, \whp, before every update and for every $\chi \in [\Delta + 1]$, we have that,
\[
    |\Phi[\chi]|
    = |\Phi[\chi] \cap S| + \sum_{i = 1}^r |D_i \cap \Phi[\chi]|
    \leq |\Phi[\chi] \cap S| + \frac{4n}{\Delta}
    \leq O\paren*{ \frac{n}{\zeta} + \log n }
\]
because, by the Dense Balance Invariant, each of the at most $2n / \Delta$ almost-cliques contributes at most two to $\Phi[\chi]$, and by
\cref{lemma:dyn-sparse-balance} for the sparse vertices. With the union bound, the bounds on the update time and the color class sizes hold with high probability.
Overall, \whp, the amortized update time during this phase is
\[
\Ot*{
    \frac{n + n^2 / \zeta}{t}
    + \frac{ n \Delta }{ \zeta^2 } + \zeta
}
= \Ot*{ \frac{n^2}{\zeta^2} + \zeta }
= \Ot*{ n^{2/3} }
\]
for $\zeta = n^{2/3} \leq O( \Delta ) \leq O( n )$ and $t = \gamma \cdot \zeta$.
\qed

\begin{remark}
    We hide three log-factors in our soft-Oh notation. We made no attempt at optimizing it but believe that using more sophisticated data-structures for sets and more careful concentration arguments, at least two of those could be avoided.
\end{remark} \section{Computing a Fresh Coloring}
\label{sec:fresh}

Recall that every $t$ updates, the algorithm recomputes a fresh coloring with nice properties. In this section, we provide and analyse this algorithm, thus prove the following proposition.

\PropFreshColoring*

\FreshColoring begins by uncoloring every vertex to start from a fresh state. It then uses a procedure called \OneShotColoring in which every vertex \emph{of $S$} tries a random color. It is a well-known fact that the remaining uncolored vertices have ``slack'' (\cref{thm:slack-gen}). The algorithm colors the remaining vertices in $S$ in a random order, which ensures that the total number of color trials is $\Ot{n}$ \cite{BRW24}.
Denser vertices do not participate in \OneShotColoring to ensure that the partial coloring it produced does not hinder the construction of a colorful matching. This is why the Sparse Slack guarantee is only in terms of $\LS(v)$ rather than $L(v) = [\Delta+1] \setminus \col[N(v)]$.
\FreshColoring colors almost-cliques sequentially, using the coloring primitives devised in \cref{sec:dynamic}.
\begin{algorithm}
\caption{Computing a Fresh Coloring\label{alg:fresh}}
\Fn{\FreshColoring}{
    \lFor*{each $v$}{\Color{v, $\bot$}} \tcp*[f]{uncolor all vertices first}

    \OneShotColoring    \tcp*[f]{generating slack}\;
    Sample a uniform permutation $\pi$ of $S$\;
    \lFor*{every uncolored $v\in S$ in the order of $\pi$}{ \RecolorSparse{v} \label{line:fresh-color-sparse} } \;

    \For{every $i = 1, 2, \ldots, r$}{\label{line:fresh-start-dense}
        \lFor{$\floor{8 a_{D_i}}$ times}{\AddAntiEdgeMatching{$D_i$} \label{line:fresh-matching}}
        \lFor*{every uncolored $v \in O_{D_i}$}{ \ColorDense{v} \label{line:fresh-color-outlier}}\;
        \lFor*{every uncolored $v \in I_{D_i}$}{ \ColorDense{v} \label{line:fresh-color-inlier}}\;
    }
}
\BlankLine\BlankLine

\Fn{\OneShotColoring}{
    \lFor*{every $v \in S$, with probability $1/8$}{ \Color{v, $\chi$} with a random $\chi \in [\Delta+1]$ \label{line:slackgen-random-col}}\;
    \For{every $\chi\in [\Delta+1]$ and every pair $u,v \in \Phi[\chi] \cap S$}{
        \lIf*{$\set{u,v} \in E$}{\Color($v$, $\bot$), \Color($u$, $\bot$)}
    }
}
\BlankLine\BlankLine

\Fn{\ColorDense{v}}{
    Sample $\chi \in L(C)$ \label{line:fresh-color-dense}\;
    \lForWithoutDo*{$u\in\Phi[\chi]$}{
        \lIf{$u\in N(v)$}{go to line \ref{line:fresh-color-dense}}
    }
    \Color{v, $\chi$}
}
\end{algorithm}

\paragraph{Generating Slack.}
To analyse the slack produced by \OneShotColoring, we use the following well-known results in the distributed graph literature (e.g., \cite[Lemma 3.3]{CLP20} or \cite[Lemma 6.1]{HKMT21} or even \cite{RM02}). Authors of \cite[Lemma A.1]{ACK19} and \cite{BRW24} use a version of this result for $\zeta \geq \Omega(\eps^2 \Delta)$.
\begin{proposition}
    \label{thm:slack-gen}
    Suppose $G$ is a graph of maximal degree $\Delta \gg 1$.
    If $\col$ is the coloring at the end of \OneShotColoring, then a $\zeta$-sparse vertex $v$ with $\zeta \gg 1$ has that\footnote{we did not attempt to optimize this constant}
    \[
        |L(v)|
        \geq [\#\text{uncolored vertices in $N(v)$}] + 2^{-22} \cdot \zeta
    \]
    with probability at least $1 - \exp \paren*{ - \Omega( \zeta) }$.
\end{proposition}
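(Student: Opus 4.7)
The plan is to follow the standard slack-generation argument from the distributed coloring literature (cf.\ \cite{EPS15,ACK19,CLP20,HKMT21}), adapted to our \OneShotColoring subroutine.

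I start by rewriting the target inequality as a color-excess statement. Let $V_\chi := N(v) \cap \Phi[\chi]$ after \OneShotColoring and let $U(v)$ denote the number of uncolored neighbors of $v$. A direct count of colors in $N(v)$ gives
\[
|L(v)| - U(v) = (\Delta + 1 - \deg(v)) + R \,, \qquad \text{where } R := \sum_\chi (|V_\chi| - 1)_+ .
\]
If $\deg(v) \le \Delta - 2^{-22}\zeta$, the first term already suffices. Otherwise, it remains to show $R \ge 2^{-22}\zeta$ with probability at least $1 - \exp(-\Omega(\zeta))$.

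Next, I lower bound $\Exp{R}$ using the at least $\zeta\Delta$ anti-edges in $G[N(v)]$ guaranteed by $\zeta$-sparsity. For each anti-edge $\set{u,u'} \subseteq N(v)$, its endpoints end up colored the same with probability $\Omega(1/\Delta)$: both attempt a color independently with probability $1/(8(\Delta+1))$, they match with probability $1/(\Delta+1)$, and neither is erased by the conflict-removal step with probability $\ge 3/4$ via a union bound over the $\le 2\Delta$ other neighbors of $u$ and $u'$. Summing over anti-edges and colors yields $\Exp*{\sum_\chi \binom{|V_\chi|}{2}} = \Omega(\zeta)$. The discrepancy between this sum and $R$ comes only from color classes of size $\ge 3$; their expected contribution is $O(\zeta)$ with a strictly smaller hidden constant, because $N(v)$ contains at most $O(\zeta\Delta^2)$ mutually-anti-adjacent triples (each triple covers three anti-edges, and each anti-edge lies in at most $\Delta$ triples), and each such triple is monochromatic with probability $O(1/\Delta^3)$ per color. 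Combining the two estimates yields $\Exp{R} \ge c\cdot\zeta$ for a universal constant $c>0$.

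Finally, I concentrate $R$ around its mean. The variable $R$ is a function of the independent coin flips made by each sparse vertex in \OneShotColoring, a single flip changes $R$ by $O(1)$ (it affects $O(1)$ color classes by one each), and the event $R \ge r$ is certified by the choices of $O(r)$ vertices. Talagrand's inequality therefore gives $\Prob{R \le \Exp{R}/2} \le \exp(-\Omega(\Exp{R})) = \exp(-\Omega(\zeta))$, which after tuning the slack constant completes the argument. The main obstacle is precisely this concentration step: verifying the bounded-Lipschitz and certifiability hypotheses requires care because the uncoloring substep couples the fates of neighboring vertices. This is the step on which the cited proofs (e.g.\ \cite[Appendix~A]{ACK19}, \cite[Lemma~6.1]{HKMT21}) spend most of their effort, and the proof in \cref{sec:fresh} will follow the same route.
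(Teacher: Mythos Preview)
The paper does not actually prove \cref{thm:slack-gen}: it is quoted as a well-known fact from the distributed-coloring literature (citing \cite{CLP20,HKMT21,RM02,ACK19}) and used as a black box in the proof of \cref{lem:fresh-slack}. There is thus no proof in the paper to compare against, and the outline you give is indeed the standard argument from those references.

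That said, one specific claim in your sketch is wrong and worth flagging. You assert that ``a single flip changes $R$ by $O(1)$ (it affects $O(1)$ color classes by one each),'' but changing the random choice of a vertex $w$ from $\chi$ to $\chi'$ can flip the final status of \emph{every} neighbor of $w$ in $N(v)$ that tried $\chi$ (or $\chi'$): if $w$ was the unique $\chi$-trying neighbor of such a $u$, then $u$ switches from uncolored to color $\chi$. Hence $|V_\chi|$ can change by as much as $|N(w)\cap N(v)|$, not by one. Only two color classes are touched, but each can move by a lot. You immediately concede that the uncoloring step ``couples the fates of neighboring vertices'' and that verifying the Lipschitz and certifiability hypotheses ``requires care'' --- which is exactly right --- but the concrete $O(1)$-Lipschitz claim just before that is inconsistent with the concession. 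The cited proofs resolve this either by working with a surrogate random variable that depends only on the colors \emph{tried} (which is genuinely $O(1)$-Lipschitz and $O(r)$-certifiable) and separately controlling the loss to the conflict-removal step, or by invoking a variant of Talagrand's inequality that tolerates rare large-effect outcomes.
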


Through \cref{thm:slack-gen}, \OneShotColoring ensures the Sparse Slack property.

\begin{lemma}
    \label{lem:fresh-slack}
    Let $\zeta$ and $S, D_1, \ldots, D_r$ be as described by \cref{prop:fresh-coloring}.
    With high probability, we have that
    \begin{itemize}
    \item if \FreshColoring reaches line \ref{line:fresh-start-dense}, every vertex in $S$ is colored, the Sparse Slack and Sparse Balance properties hold; and
    \item before every call to \RecolorSparse (line \ref{line:fresh-color-sparse}), the Sparse Balance property holds.
    \end{itemize}
\end{lemma}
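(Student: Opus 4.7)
The plan is to apply the slack generation result of \cref{thm:slack-gen} to every sparse vertex right after \OneShotColoring, to show that the slack survives the \RecolorSparse loop (giving Sparse Slack and ensuring the loop succeeds), and finally to control $|\Phi[\chi] \cap S|$ via a concentration argument over the random color trials.

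First, I would apply \cref{thm:slack-gen} to each $v \in S$: with probability $1 - e^{-\Omega(\zeta)}$, we have $|L(v)| \geq U_v + 2^{-22} \zeta$ right after \OneShotColoring, where $U_v$ counts $v$'s uncolored neighbors. Since \OneShotColoring only colors vertices of $S$, every dense vertex is still uncolored, so $\LS(v) = L(v)$ at this moment. Under the assumption $\zeta = \Omega(\log n)$ with a sufficiently large hidden constant, a union bound over all $v \in S$ gives this lower bound simultaneously \whp.

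Next, I would argue by induction on the loop on line \ref{line:fresh-color-sparse} that the invariant $|\LS(v)| \geq U_v^S + 2^{-22}\zeta$ is preserved for every $v \in S$, where $U_v^S$ is the number of still-uncolored sparse neighbors of $v$. Each \RecolorSparse step recolors one sparse vertex, decreasing $U_v^S$ by at most one and removing at most one color from $\LS(v)$. Taking $\gamma := 2^{-24}$ then yields $|\LS(v)| \geq 3\gamma\zeta$ at every moment, including at the end of the loop, which is the Sparse Slack property. Moreover, during the sparse phase no dense vertex is colored, so the dense-stealing branch of \RecolorSparse (line \ref{line:recol-sparse-steal}) is inert and every random color trial is accepted with probability at least $2^{-22}\zeta/(\Delta+1)$; hence each \RecolorSparse call terminates \whp, and a union bound over $|S| \leq n$ ensures that all sparse vertices are colored by the time line \ref{line:fresh-start-dense} is reached.

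The core of the argument is Sparse Balance. Fix $\chi \in [\Delta+1]$. The contribution of \OneShotColoring to $|\Phi[\chi] \cap S|$ is a sum of $|S|$ independent Bernoulli indicators with parameter at most $1/(8(\Delta+1))$, which concentrates at $O(n/\Delta + \log n)$ by Chernoff. For the contribution of subsequent \RecolorSparse calls, conditional on the history up to the moment $v$ is processed, the color output for $v$ is uniform over $\LS(v)$ (the dense-stealing branch is inactive and $|\LS(v)| \geq 2^{-22}\zeta$ by the previous paragraph), so the probability of producing $\chi$ is at most $2^{22}/\zeta$. A Chernoff bound with stochastic domination (\cref{lem:chernoff-dom}) then gives that, after any prefix of $k$ \RecolorSparse calls, at most $O(k/\zeta + \log n) \leq O(n/\zeta + \log n)$ of them chose $\chi$ \whp. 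The main obstacle I anticipate is precisely the intermediate-time form required by the second bullet, which forbids fixing a single stopping moment; it is handled by a final union bound over all $\chi$ and over the $|S|+1$ prefix lengths, affordable since each Chernoff tail is polynomially small.
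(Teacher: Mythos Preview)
Your overall skeleton matches the paper's, but there is a genuine gap in the very first step. You apply \cref{thm:slack-gen} to $v\in S$ using that $v$ is $\zeta$-sparse in $G$. However, \OneShotColoring only colors vertices of $S$: the slack it creates for $v$ comes from pairs of \emph{sparse} non-adjacent neighbors of $v$ receiving the same color. If all the non-edges witnessing $v$'s $\zeta$-sparsity in $G$ involve dense neighbors, no slack is generated for $v$ at all. Concretely, \cref{thm:slack-gen} is a black-box result for the procedure run on the whole input graph; to use it here you must feed it the graph $G[S]$, which requires first showing that $v$ is $\Omega(\zeta)$-sparse \emph{in $G[S]$}. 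This is exactly the step the paper spends most of its effort on: if $v$ has $\ge \zeta/4$ neighbors outside $S$ then the bound on $|\LS(v)|$ holds directly (those neighbors are uncolored and ignored by $\LS$), and otherwise at most $(\zeta/4)\Delta$ of the $\ge (\zeta/2)\Delta$ anti-edges in $G[N(v)]$ touch $V\setminus S$, so $v$ is $(\zeta/4)$-sparse in $G[S]$ and \cref{thm:slack-gen} applies with parameter $\zeta/4$.

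Once this reduction is in place, the rest of your argument --- the invariant $|\LS(v)|\ge U_v^S+\Omega(\zeta)$ surviving the \RecolorSparse loop, termination of each call, and the Chernoff-with-domination bound on $|\Phi[\chi]\cap S|$ for every prefix length and color --- is essentially identical to the paper's. You just need to adjust your constant $\gamma$ downward to absorb the factor $4$ lost in the sparsity transfer.
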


\begin{proof}
    \RecolorSparse{v} ends only after properly coloring $v$, so if the algorithm reaches line \ref{line:fresh-start-dense}, all vertices of $S$ must be properly colored. We first prove that the Sparse Slack properly holds \whp after \OneShotColoring, then the claim about the Sparse Balance. Both hold simultaneously \whp by union bound.

    We claim that when \OneShotColoring ends, \whp, every vertex $v\in S$ has
    \begin{equation}
        \label{eq:sparse-slack}
    |\LS(v)|
    \geq [\#\text{uncolored vertices in $\NS(v)$ }]
        + 4\gamma \cdot \zeta
        \quad\text{where}\quad
        \gamma = 2^{-25} \ .
    \end{equation}
    Indeed, consider a fixed vertex $v \in S$. By definition of $S$, the vertex $v$ is $\zeta$-sparse in $G$. Since we run \OneShotColoring in $G[S]$, let us argue that $v$ is $(\zeta/4)$-sparse in $G[S]$. We may assume that $v$ has degree at least $\Delta - \zeta$, since it otherwise always have $\zeta$ available colors. If $v$ has at least $\zeta/4$ neighbors in $V\setminus S$, then \cref{eq:sparse-slack} holds since $\LS$ ignores colors of neighbors in $V\setminus S$. Otherwise, $G[N(v)]$ contains at least $(\zeta/2)\Delta$ anti-edges, of which at most $(\zeta/4)\Delta$ are incident to $V \setminus S$, so $v$ is at least $(\zeta/4)$-sparse in $G[S]$. Since we assume that $\zeta \geq \Omega(\log n)$ for some large enough hidden constant, \cref{thm:slack-gen} applies to $v$ and, with probability at least $1 - \exp(-\Omega(\zeta)) \geq 1 - 1/\poly(n)$, we obtain that
    $
    |L(v)| \geq [\#\text{uncolored vertices in $\NS(v)$}] + 4\gamma \cdot \zeta
    $
    for $v$. This holds \whp for every vertex in $S$ by union bound over its vertices. This implies \cref{eq:sparse-slack} because after \OneShotColoring, only vertices of $S$ are colored, thus $L(v) = \LS(v)$. Calls to \RecolorSparse colors extends the coloring; thus, when the algorithm reaches line \ref{line:fresh-start-dense}, the Sparse Slack property holds.

    Denote by $v_1, \ldots, v_k$ the uncolored vertices of $S$ after \OneShotColoring in the order by which they are colored by \FreshColoring.
    Consider a fixed $\chi\in[\Delta+1]$ and let us now prove that before the $i$-th call to \RecolorSparse for fixed a $i\in[k]$, \whp over the randomness of sampled so far, we have $|\Phi[\chi]| \leq O(n / \zeta + \log n)$.
    During \OneShotColoring, a vertex may adopt the color it sampled and, by the classic Chernoff Bound, \whp, every color is sampled by at most $O(n/\Delta + \log n)$ vertices. Define $X_j$ the random variable indicating if $v_j$ is colored with $\chi$ by \RecolorSparse. \cref{eq:sparse-slack} implies \RecolorSparse colors $v$ with a color uniformly distributed in a set of size at least $4\gamma\cdot \zeta$. More formally, we have that $\Exp{ X_j \given X_1, \ldots, X_{j-1} } \leq 1/(4\gamma \zeta)$. So, we expect $O(i/\zeta) \leq O(n/\zeta)$ vertices to be colored with $\chi$ over the first $i$ calls to \RecolorSparse. The Chernoff Bound (\cref{lem:chernoff-dom}) shows that, \whp, when the algorithm makes its $i$-th call to \RecolorSparse, the color class contains $O(n/\Delta + n/\zeta + \log n) = O(n/\zeta + \log n)$ vertices. This holds \whp for every $\chi \in[\Delta+1]$ and $i\in[k]$ by union bound.
\end{proof}

As \cite[Appendix A.1]{BRW24} showed, this algorithm makes $\Ot{n}$ color trials (executions of line \ref{line:recol-sparse-sample} in \RecolorSparse) with high probability. We provide a proof of that fact in \cref{sec:proof-n-trials} for completeness. We obtain the following corollary by union bound: \OneShotColoring runs in $O(n + \Delta\max_\chi|\Phi[\chi]|^2) = \Ot{ n + n^2 / \Delta }$ and each of the $\Ot{n}$ color trials takes $O(n/\zeta + \log n)$ time by \cref{lem:fresh-slack}.
\begin{corollary}
    \label{cor:fresh-sparse}
    W.h.p., \FreshColoring reaches line \ref{line:fresh-start-dense} in $\Ot{ n + n^2 / \zeta }$ time, every vertex in $S$ is colored, the Sparse Slack and Sparse Balance properties hold.
\end{corollary}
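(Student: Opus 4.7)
The plan is to combine the conditional correctness given by \cref{lem:fresh-slack} (every vertex of $S$ colored and the Sparse Slack and Sparse Balance properties holding once line \ref{line:fresh-start-dense} is reached) with a three-part running-time analysis, covering \OneShotColoring, the random permutation $\pi$ of $S$, and the sequence of \RecolorSparse calls issued at line \ref{line:fresh-color-sparse}. Both termination of the sparse phase and the claimed runtime will follow from a uniform bound on the total number of color trials.

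For \OneShotColoring, I would note that line \ref{line:slackgen-random-col} does $O(1)$ work per vertex, and the conflict-resolution double loop iterates through each color class. A standard Chernoff bound shows that every color class receives at most $O(n/\Delta + \log n)$ sparse vertices w.h.p., so the pairs loop costs $O\paren*{\Delta \cdot (n/\Delta + \log n)^2} = \Ot{n + n^2/\Delta}$. Since the corollary is applied in the regime $\Delta \geq \Omega(\zeta)$ fixed in \cref{sec:dynamic}, this is $\Ot{n + n^2/\zeta}$. Sampling $\pi$ and iterating over it add only $O(n)$.

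For the \RecolorSparse calls, I would invoke the bound proved in \cref{sec:proof-n-trials} (adapted from \cite[Appendix A.1]{BRW24}) that the total number of executions of line \ref{line:recol-sparse-sample} across all calls is $\Ot{n}$ w.h.p.; in particular every such call terminates, so the algorithm indeed reaches line \ref{line:fresh-start-dense}. By \cref{lem:fresh-slack}, the Sparse Balance property holds before every call to \RecolorSparse, so $|\Phi[\chi] \cap S| = O(n/\zeta + \log n)$ at all times. Denser vertices are colored only after line \ref{line:fresh-start-dense}, so $\Phi[\chi] = \Phi[\chi] \cap S$ during this phase, and each trial pays $O\paren*{(n/\zeta + \log n)\log n} = \Ot{n/\zeta}$ to scan the color class and perform the $O(\log n)$-time membership checks against $\NS(v)$ and $\ND(v)$. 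Multiplying the two bounds yields $\Ot{n} \cdot \Ot{n/\zeta} = \Ot{n^2/\zeta}$ total across all \RecolorSparse calls.

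Summing the three contributions gives an overall runtime of $\Ot{n + n^2/\zeta}$. A union bound over the constant number of high-probability events (from \cref{lem:fresh-slack}, the trial-count bound of \cref{sec:proof-n-trials}, and the Chernoff bound for \OneShotColoring's color classes) preserves the w.h.p.\ guarantee, yielding the corollary. I do not anticipate a significant obstacle: the conceptually delicate parts (slack generation, balance preservation under a random order, and the total trial-count bound) are already absorbed into \cref{lem:fresh-slack} and \cref{sec:proof-n-trials}; what remains is careful bookkeeping of per-trial and per-class costs.
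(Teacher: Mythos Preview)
Your proposal is correct and follows essentially the same approach as the paper: bound \OneShotColoring's cost via the quadratic scan over color classes (Chernoff on class sizes), invoke the $\Ot{n}$-trial bound of \cref{sec:proof-n-trials}, and charge each trial $\Ot{n/\zeta}$ using the Sparse Balance guarantee of \cref{lem:fresh-slack}, combining via union bound. You supply a bit more detail than the paper (e.g., explicitly noting $\Phi[\chi]=\Phi[\chi]\cap S$ before line \ref{line:fresh-start-dense} and that finiteness of the trial count implies termination), but the skeleton is identical.
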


\paragraph{Coloring Denser Vertices.}
Random color trials in $[\Delta+1]$ according to a random ordering could also be used for dense vertices, but as we need to ensure the Matching and Dense Balance invariants, we use a more tailord approach albeit similar.

\begin{lemma}
    \label{lem:fresh-dense}
    Let $\zeta$, $S, D_1, \ldots, D_r$ be as described by \cref{prop:fresh-coloring}. For any partial coloring outside of $D = D_i$ such that the Sparse Slack, Sparse Balance and Dense Balance hold, \whp, \FreshColoring extends the coloring to $D$ (lines \ref{line:fresh-matching} to \ref{line:fresh-color-inlier}) in $\Ot{ n + n\Delta/\zeta }$ time and such that $D$ contains a colorful matching of size $\floor{ 8 a_D }$ and the Dense Balance holds.
\end{lemma}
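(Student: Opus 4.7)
The plan is to analyze the three loops in processing $D = D_i$ separately --- the matching loop (line~\ref{line:fresh-matching}), the outlier loop (line~\ref{line:fresh-color-outlier}), and the inlier loop (line~\ref{line:fresh-color-inlier}) --- and argue that each preserves Dense Balance and fits within the $\Ot{n + n\Delta/\zeta}$ budget. Dense Balance is immediate in all three loops: \ColorDense samples from $L(D)$, so every outlier and inlier receives a color currently unused in $D$, while \RecolorMatching (invoked via \AddAntiEdgeMatching) filters on $M_D$, so its chosen color is also currently unused in $D$. Hence only the $\floor{8 a_D}$ matched pairs share colors, each exactly twice.

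For the matching loop I would invoke \cref{lem:increase-matching-time} at each of the $\floor{8 a_D}$ iterations to conclude that a new pair is added in $\Ot{\max_\chi |\Phi[\chi]|}$ time. Crucially, at every moment during this loop every colored vertex of $D$ belongs to a matched pair whose color already lies in $M_D$, so after the $M_D$-filter inside \RecolorMatching one always finds $\Phi_D[\chi] = \emptyset$ and the stealing branch that would invoke \RecolorDense is never taken; this removes the additive $\zeta$ term from the bound of \cref{lem:recolor-matching} in this regime. For the outlier loop, $|L(D)|$ stays at least $\Delta/2$ throughout: it starts at $\Delta + 1 - \floor{8 a_D} \geq (1 - O(\eps))\Delta$, and at most $|O_D| \leq |D|/4$ additional colors are consumed. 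Since an outlier has $e_v \leq 3\eps\Delta$, a uniform sample from $L(D)$ lies in $L(D) \cap L(v) = L(D) \setminus \col(E(v))$ with probability at least $1 - e_v/|L(D)| \geq 1/2$, so $O(\log n)$ samples suffice \whp per outlier, each verified in $\Ot{\max_\chi |\Phi[\chi]|} = \Ot{n/\zeta}$ time using Sparse Balance on $S$ and Dense Balance on already-processed cliques.

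The subtle step is the inlier loop. The Accounting Lemma (\cref{prop:accounting}) gives $|L(D) \cap L(v)| \geq |M_D| - a_v + 1 \geq \floor{8 a_D} - 8 a_D + 1 \geq 1$ for every uncolored inlier (using that $a_v$ is an integer at most $8 a_D$), so a valid color in $L(D)$ always exists. Writing $L_j = |L(D)|$ just before the $j$-th inlier call, the per-trial success probability is at least $(L_j - e_v)/L_j$, which is at least $1/2$ as long as $L_j \geq 2 e_v$ --- yielding the outlier-style $O(\log n)$ trials per inlier. The main obstacle is the tail of $O(e_D)$ inliers where $L_j$ has shrunk below $16 e_D$: on those, the success probability can be as small as $1/O(e_D)$, and a per-vertex high-probability bound would lose a $\log n$ factor on each such inlier, exceeding the budget after summing across cliques. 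The trick I plan to use is to bound the total number of trials across the whole inlier loop rather than per vertex --- the sum of expected trials in the tail forms a harmonic-type sum $\sum_{k \geq 1} (e_v + k)/k = \Ot{e_D}$, and Chernoff-style concentration for a sum of independent geometric random variables yields the same $\Ot{e_D}$ bound with high probability.

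Combining the three loops gives $\Ot{(a_D + |O_D| + |I_D| + e_D) \cdot n/\zeta} = \Ot{|D| \cdot n/\zeta}$, which lies within $\Ot{n + n\Delta/\zeta}$. Since the three loops only touch vertices of $D$, the Sparse Slack and Sparse Balance properties on $S$ pass through unchanged, and $|M_D| = \floor{8 a_D}$ is the direct output of the matching loop, completing the required guarantees.
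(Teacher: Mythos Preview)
Your three-loop breakdown matches the paper's proof exactly, and your handling of the matching loop (no stealing branch) and the outlier loop (constant success probability via $|L(D)| \geq \Delta/2$ and $e_v = O(\eps\Delta)$) is correct and essentially identical to the paper's.

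Two remarks on the inlier tail. First, your harmonic expression $\sum_k (e_v+k)/k$ presupposes a success probability of at least $k/(k+e_v)$ for the inlier with $k$ uncolored peers remaining; that in turn needs $|L(D)\cap L(v)| \geq k$, not merely $\geq 1$. You only wrote out the $\geq 1$ case of the Accounting Lemma, but the lemma's uncolored-vertex term gives exactly $|L(D)\cap L(v)| \geq |M_D| - a_v + k \geq k$ here, so just make that explicit. Second, once you have that bound, your worry about per-vertex high-probability bounds blowing up is unfounded: the paper uses precisely a per-vertex \whp bound of $O\bigl(\tfrac{\zeta\log n}{k-i+1}\bigr)$ trials (via $|L(D)\cap L(v)| \geq k-i+1$ and $|L(D)| \leq O(\zeta)$ in the tail), and the harmonic sum over $i$ already gives $O(\zeta\log^2 n)$ total trials, hence $\Ot{n}$ time for the inlier loop. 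Your sum-of-geometrics concentration is valid but unnecessary --- it is the per-vertex bound in disguise. With that simplification your argument and the paper's coincide.
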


\begin{proof}
Since we begin with all vertices uncolored, \AddAntiEdgeMatching never makes a call to \RecolorDense. Using the same argument as \cref{lem:increase-matching-time}, it samples at most $O(\log n)$ anti-edges before finding one with both endpoints unmatched and tries $O(\log n)$ colors before finding one available for both endpoints. The algorithm therefore spends
\[
O(a_D) \times O\paren*{ \max_{\chi} |\Phi[\chi]| \log^2 n }
= \Ot*{\frac{n\Delta}{\zeta} }
\]
time computing a colorful matching (line \ref{line:fresh-matching}) because $8a_D \leq O(\Delta)$. Each call to \AddAntiEdgeMatching increases the size of the colorful matching by one, so the matching and Dense Balance invariants hold when \FreshColoring reaches line \ref{line:fresh-color-outlier}.

By Markov's inequality, $D$ contains at most $|D|/4$ outliers. Since all inliers $3|D|/4 \geq \Delta/2$ of $D$ are uncolored, the accounting lemma (\cref{prop:accounting}) shows that at least $\Delta / 2 - a_v \geq \Delta / 3$ colors of the clique palette are available for each outlier $v$. Hence, \ColorDense{v} where $v\in O_D$ ends in $O(\max_\chi |\Phi[\chi]|\log^2 n) = \Ot{ n/\zeta }$ time with high probability. So the coloring of outliers in $D$ takes up to $\Ot{ n\Delta/\zeta }$ time.

Suppose inliers are colored in the order $v_1, v_2, \ldots, v_k$. When the algorithm calls \ColorDense{$v_i$}, at least $k -i + 1$ remain uncolored in $D$. In particular, a random color is suitable to color $v_i$ with probability at least $(k - i + 1)/(16 c \cdot \zeta)$ where $c = \Theta(1/\eps^2)$ is a constant. Indeed, recall that an inlier has $a_v \leq 8a_D$ and $e_v \leq 8e_D$ while $D$ has that $a_D + e_D \leq c \cdot \zeta$ for some constant $c$ (\cref{def:refined-acd}). If $|L(D)| > 16c \cdot \zeta$, then $|L(D) \cap L(v)| = |L(D)| - e_v \geq |L(D)|/2$ and a random color in $L(D)$ belongs to $L(v)$ with probability at least $1/2$. Otherwise, the accounting lemma (\cref{prop:accounting}) implies that $|L(D) \cap L(v)| \geq k - i + 1$, which means that a random color in $L(D)$ is available to $v$ with probability at least $(k - i + 1)/|L(D)| \geq (k - i + 1)/(16c \cdot \zeta)$. In any case, after $O(\frac{\zeta\log n}{k - i + 1})$ color tries, \ColorDense finds a color for $v$ with high probability. Searching the color class for each color trial, the total time necessary to color inliers of $D$ is
\[
\sum_{i=1}^k O\paren*{ \frac{\zeta \log n}{k - i + 1} }
    \times O\paren*{ \max_{\chi} |\Phi[\chi]| \log n }
= O\paren*{ n \log^3 n }
\]
because $\max_\chi|\Phi[\chi]| \leq O(n/\zeta)$ from the sparse and Dense Balance and $k\leq 2\Delta$. So the time needed to color $D$ is $\Ot{ n + n\Delta/\zeta }$ as it takes $\Ot{ n\Delta/\zeta }$ time to color the matching and outliers and $\Ot{n}$ time to color inliers.
\end{proof}

\cref{prop:fresh-coloring} follows trivially from combining \cref{cor:fresh-sparse,lem:fresh-dense} for all $O(n/\Delta)$ almost-cliques.

\paragraph{Acknowledgment.}
Discussions with participants of Dagstuhl Seminar 24471, Graph Algorithms: Distributed Meets Dynamic, served as a key motivation for this work.

\bibliographystyle{alpha}
\bibliography{references}

\appendix

\section{Concentration Inequalities}
We use the Chernoff Bound to prove concentration of random variables around their expected values. To comply with dependencies between our random variables, we use a stronger form than the classic variant of the bound. We refer readers to \cite{DP09} or \cite[Section 1.10]{Doerr2020} or \cite[Corollaries 6 and 14]{KQ21}.

\begin{lemma}[Chernoff bounds with stochastic domination]\label{lem:chernoff-dom}
    Let $X_1, \ldots, X_n$ be random variables and $Y_i$ be any function of the first $i$ variables with value in $[0,1]$. Consider $Y = \sum_{i=1}^n Y_i$.
    If there exists $q_1, \ldots, q_n \in [0,1]$ such that $\Exp*{ Y_i \given X_1, \ldots, X_{i-1} } \le q_i$ for every $i\in [n]$, then for any $\delta>0$, 
    \begin{align}
      \label{eq:chernoffless}
      &\Prob*{ Y \ge (1+\delta) \mu }
      \le \exp\paren*{ - \frac{ \delta^2 }{ 2 + \delta } \mu }
      \quad\text{where}\quad
      \mu \geq \sum_{i=1}^n q_i\ .
    \shortintertext{Conversely, if there exists $q_1, \ldots, q_n \in [0,1]$ such that $\Exp*{ Y_i \given X_1, \ldots, X_{i-1} } \geq q_i$ for every $i\in [n]$, then for any $\delta \in (0,1)$,}
    \label{eq:chernoffmore}
      &\Prob*{ Y \le (1-\delta)\mu }
      \le \exp\paren*{ -\frac{\delta^2}{2}\mu}
      \quad\text{where}\quad
      \mu \leq \sum_{i=1}^n q_i\ .
    \end{align}
\end{lemma}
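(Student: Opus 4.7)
The plan is to follow the standard exponential-moment argument (Cram\'er--Chernoff) but using the tower property to peel off the conditional expectations one index at a time, which is exactly what makes the stochastic-domination version work.

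First I would handle the upper tail \cref{eq:chernoffless}. For any $t > 0$, Markov's inequality gives
\[
\Prob*{ Y \ge (1+\delta)\mu } \le e^{-t(1+\delta)\mu}\cdot \Exp*{e^{tY}}.
\]
The key step is to bound $\Exp{e^{tY}}$. Since each $Y_i$ takes values in $[0,1]$, convexity of $x \mapsto e^{tx}$ yields the pointwise inequality $e^{tY_i} \le 1 + Y_i(e^t - 1)$. Taking conditional expectations and using the hypothesis $\Exp*{Y_i \mid X_1,\dots,X_{i-1}} \le q_i$, one obtains
\[
\Exp*{e^{tY_i} \mid X_1,\dots,X_{i-1}} \le 1 + q_i(e^t-1) \le \exp\paren*{q_i(e^t-1)}.
\]
I would then iterate from $i = n$ down to $i = 1$ using the tower property: writing $Y = Y_{<i} + Y_i + Y_{>i}$ and conditioning on $X_1,\dots,X_{i-1}$, the factor $e^{tY_{<i}}$ is measurable, so it pulls out, and the remaining factors telescope. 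This produces the clean bound $\Exp{e^{tY}} \le \exp((e^t - 1)\sum_i q_i) \le \exp((e^t-1)\mu)$.

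Next I would optimize the Markov bound. Substituting back,
\[
\Prob*{ Y \ge (1+\delta)\mu } \le \exp\paren*{\mu\bigl((e^t-1) - t(1+\delta)\bigr)}.
\]
Choosing $t = \ln(1+\delta)$ gives the standard exponent $\mu\bigl(\delta - (1+\delta)\ln(1+\delta)\bigr)$. To match the stated form of the bound I would invoke the elementary inequality $(1+\delta)\ln(1+\delta) - \delta \ge \delta^2/(2+\delta)$ valid for all $\delta > 0$, which is a one-variable calculus check (e.g.\ comparing derivatives at $\delta = 0$).

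For the lower tail \cref{eq:chernoffmore}, I would repeat the scheme with $t < 0$; the inequality $e^{tY_i} \le 1 + Y_i(e^t-1)$ still holds because $e^t - 1 < 0$ makes the right-hand side decreasing in $Y_i$, so the lower bound on $\Exp{Y_i \mid \cdots}$ turns into an upper bound on the conditional MGF. Markov with $t = \ln(1-\delta) < 0$ then gives the exponent $\mu\bigl(-\delta - (1-\delta)\ln(1-\delta)\bigr)$, which is upper-bounded by $-\delta^2\mu/2$ via the standard calculus inequality $(1-\delta)\ln(1-\delta) + \delta \ge \delta^2/2$ on $\delta \in (0,1)$.

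The only subtle point — hardly an obstacle, but worth flagging — is that the statement allows $\mu$ to be any upper bound on $\sum_i q_i$ (resp.\ any lower bound, in the second case). This is harmless since the function $(e^t-1)\mu - t(1+\delta)\mu$ is monotone in $\mu$ in the direction we need after the optimal $t$ is chosen; equivalently, one applies the inequality with $\sum_i q_i$ in place of $\mu$ and then uses monotonicity of the exponent. No additional ideas are required beyond the tower-property peeling, which is what replaces the usual independence step and makes the argument robust to adaptive dependencies of the $Y_i$ on the history $X_1,\dots,X_{i-1}$.
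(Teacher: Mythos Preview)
Your proof is correct and follows the standard Cram\'er--Chernoff argument adapted to the martingale setting via the tower property; the convexity bound, the peeling, the optimization in $t$, and the handling of the slack in $\mu$ are all in order. However, the paper does not actually prove this lemma: it merely states it and cites \cite{DP09}, \cite[Section 1.10]{Doerr2020}, and \cite[Corollaries 6 and 14]{KQ21} for the proof. Your write-up is essentially what one would find spelled out in those references, so there is no discrepancy in approach---you have simply supplied the details the paper chose to outsource.
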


\section{Missing Proofs}
\label{sec:missing-proofs}

\subsection{Nearly Linearly Many Color Trials for Fresh Colorings}
\label{sec:proof-n-trials}

This is a streamlined adaptation of \cite[Appendix A.1]{BRW24}. 
It was already shown in \cite{FM24} that the \emph{expected} number of trials is $O(n\log n)$ --- even when vertices are not necessarily sparse. 
In fact, our proof does not rely on the sparsity or the slack provided by \OneShotColoring at all.

\begin{lemma}
    The number of color trials (line \ref{line:recol-sparse-sample} in \RecolorSparse) made by \FreshColoring is $O(n\log^2 n)$ with high probability.
\end{lemma}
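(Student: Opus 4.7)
The plan is to bound the total number of trials $T = \sum_{v} T_v$, where the sum ranges over vertices of $S$ processed at line \ref{line:fresh-color-sparse} and $T_v$ is the number of samples executed at line \ref{line:recol-sparse-sample} during \RecolorSparse{v}. I would first establish $\Exp{T} = O(n \log n)$, and then promote this to $T = O(n \log^2 n)$ with high probability via the standard geometric tail.

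For the expectation, condition on the uniform random permutation $\pi$ by which \FreshColoring processes the remaining uncolored vertices of $S$. Given $\pi$ and the state when $v$ is processed, $T_v$ is stochastically dominated by $\Geo((\Delta+1-N_v)/(\Delta+1))$, where $N_v$ counts the $\NS$-neighbors of $v$ that are already colored (from \OneShotColoring or from earlier iterations). The key observation is that $\pi$ restricted to $\set{v}\cup \NS(v)$ is symmetric, so the rank of $v$ among these $|\NS(v)|+1$ elements is uniform, and $N_v$ is (stochastically) at most a uniform variable on $\set{0, 1, \ldots, |\NS(v)|}$. Therefore
\[
\Exp{T_v} \leq \frac{1}{|\NS(v)|+1}\sum_{k=0}^{|\NS(v)|}\frac{\Delta+1}{\Delta+1-k} \leq \sum_{j=1}^{\Delta+1}\frac{1}{j} = O(\log n),
\]
and summing over $v \in S$ yields $\Exp{T} = O(n \log n)$.

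To obtain the high-probability bound, the geometric tail gives $\Prob{T_v > (c \log n)/p_v \given p_v} \leq n^{-c}$, and a union bound over $v \in S$ yields $T \leq O(\log n) \cdot \sum_v 1/p_v$ w.h.p. I would then bound $\sum_v 1/p_v \leq \sum_v (\Delta+1)/(\Delta+1-N_v)$ by bucketing vertices dyadically: for $i = 0, \ldots, O(\log \Delta)$, set $B_i = \set{v \in S : \Delta+1-N_v \in (\Delta/2^{i+1}, \Delta/2^i]}$. The uniform distribution of $N_v$ implies $\Prob{v \in B_i} = O(2^{-i})$, so $\Exp{|B_i|} = O(n/2^i)$, and a Chernoff-type concentration bound yields $|B_i| = O(n/2^i + \log n)$ w.h.p. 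Each $v \in B_i$ contributes $O(2^i)$ to the sum, so bucket $i$ contributes $O(n + 2^i \log n)$; summing over the $O(\log n)$ buckets gives $\sum_v 1/p_v = O(n \log n)$ w.h.p., and hence $T = O(n \log^2 n)$ w.h.p.

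The main obstacle is the concentration of $|B_i|$: the indicators $\mathbf{1}[v \in B_i]$ are correlated through the shared permutation $\pi$. However, each such indicator is determined only by the relative order of $v$ with respect to $\NS(v)$, so a negative-association-style argument on permutation-rank indicators (or a bounded-differences analysis à la McDiarmid that exploits the fact that swapping two elements of $\pi$ changes $N_v$ by at most $1$ for each $v$) should yield the required exponential tail bound on $|B_i|$.
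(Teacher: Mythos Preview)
Your overall strategy (bound $\Exp{T}$, then promote via geometric tails and a dyadic bucketing) is sound in spirit, but there are two gaps, one minor and one substantial.

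\textbf{Minor: the stochastic domination goes the wrong way.} You write that $N_v$ is stochastically \emph{at most} uniform on $\{0,\dots,|\NS(v)|\}$. In fact, conditioned on \OneShotColoring, $N_v = |C_v| + (\text{rank of }v\text{ among }\{v\}\cup U_v)-1$, where $C_v$ (resp.\ $U_v$) are the neighbors of $v$ in $S$ that are colored (resp.\ uncolored) after \OneShotColoring. So $N_v$ is uniform on $\{|C_v|,\dots,|\NS(v)|\}$ and hence stochastically \emph{dominates} the uniform on $\{0,\dots,|\NS(v)|\}$; your displayed inequality for $\Exp{T_v}$ is therefore unjustified. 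The fix is easy: split off the vertices with $|\NS(v)|\le\Delta/2$ (for which $T_v=O(1)$), and for the rest use that each neighbor is colored by \OneShotColoring with probability at most $1/8$, so $|U_v|\ge|\NS(v)|/2\ge\Delta/4$ w.h.p., which recovers $\Exp{T_v}=O(\log n)$.

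\textbf{Substantial: concentration of $|B_i|$.} Neither of your two proposed routes works. For McDiarmid, your Lipschitz claim (``swapping two elements of $\pi$ changes $N_v$ by at most $1$'') is false: if $v$ itself is one of the swapped elements, $N_v$ can jump by $\Theta(\Delta)$; and even for the other vertices, a single swap can shift $\Theta(\Delta)$ many $N_u$'s by one, so the Lipschitz constant for $|B_i|$ is $\Theta(\Delta)$ and the resulting bound is vacuous. For negative association, the indicators $\mathbf{1}[v\in B_i]$ are complicated nonmonotone functions of the relative orders within overlapping neighborhoods, and no standard NA result applies.

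The paper sidesteps this obstacle by \emph{not} bucketing on the dependent quantity $N_v$. Instead, it views the uniform permutation as: each vertex independently samples a bucket index in $[k]$ with $k=\Delta/(c\log n)$, then orders within buckets. Bucket membership is now a sum of independent Bernoullis, so Chernoff gives directly that every high-degree $v$ has $\Omega(\log n)$ uncolored neighbors in \emph{every} bucket. A vertex in one of the last $2^i$ buckets therefore has $\Omega(2^i\log n)$ uncolored neighbors still ahead of it, hence needs $O(\Delta/2^i)$ trials w.h.p.; since only $O(2^i n/k)$ vertices lie in those buckets, each dyadic scale contributes $O(n\log n)$ trials, and summing over $O(\log n)$ scales gives the claimed $O(n\log^2 n)$. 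The key difference from your approach is that the bucketing is defined by \emph{independent} per-vertex randomness rather than by the permutation-dependent $N_v$, which is exactly what makes the concentration step go through.
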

\begin{proof}
    First, observe that a vertex with fewer than $\Delta/2$ neighbors in $S$ always has $\Delta/2$ colors available. Hence, \RecolorSparse colors such a vertex in $O(\log n)$ trials with high probability. We henceforth focus on vertices with at least $\Delta/2$ neighbors in $S$, which we refer to as high-degree vertices. Note that, \whp, every high-degree vertex has at least $\Delta/4$ uncolored neighbors after \OneShotColoring because vertices try a color only with probability $1/8$ (line \ref{line:slackgen-random-col}).

    Let $k = \Delta/(c\log n)$. We can see the process of sampling a uniform permutation of the vertices as the result of the following process: each vertex samples independently a uniform value in $[k]$; call $B_i$ the set of vertices that sampled $i$ and sample a uniform permutation of $B_i$; the resulting permutation is the one obtained by ordering vertices according to $B_1, B_2, \ldots, B_k$ first and then according to the internal order we sampled. Observe that by the Chernoff Bound, we have the following direct facts with high probability: every high-degree vertex has at least $\Delta/(8k) = (c/8)\log n$ uncolored neighbors in every $B_i$, and every $B_i$ contains at most $2n/k$ vertices. Now, for every $i\in \set{0,1,\ldots,\ceil{ \log k }}$, let $V_i = \bigcup_{j = 0}^{2^i - 1} B_{k-j}$. A high-degree vertex $v \in V_i \setminus V_{i-1}$ for $i>0$ has at least $2^{i-1}(c/8)\log n$ uncolored neighbors in $V_{i-1}$, which implies that it is still uncolored after $(\Delta + 1)/2^{i-1}$ color trials with probability at most
    \[
    \paren*{ 1 - \frac{2^{i-1} (c/8)\log n}{\Delta+1} }^{(\Delta + 1)/2^{i-1}}
    \leq \exp\paren*{ -\frac{2^{i-1} (c/8)\log n}{\Delta+1} \cdot \frac{\Delta + 1}{2^{i-1}} } \leq n^{-c/8} \ .
    \]
    Overall, the algorithm colors every vertex in $V_i \setminus V_{i-1}$ for $i > 0$ with
    \[
    |V_i| \cdot \frac{\Delta+1}{2^{i-1}}
    \leq 2^i \frac{2n}{k} \cdot \frac{\Delta+1}{2^{i-1}}
    \leq O(n\log n) \ .
    \]
    many trials.
    The algorithm therefore uses $O(n\log^2 n)$ color trials to color vertices in $B_1, \ldots, B_{k-1}$. Coloring a vertex in $B_k$ uses $O(\Delta\log n)$ color trials \whp, so the total number of color trials made by the algorithm is $O(n\log^2 n) + O(n\log n/\Delta) \cdot O(\Delta\log n) = O(n\log^2 n)$.
\end{proof}

\end{document}